\pdfoutput=1
\documentclass[a4paper,USenglish]{lipics-v2021}
\usepackage[utf8]{inputenc}
 
\usepackage{algorithmic}
\usepackage{textcomp}
\usepackage{xcolor}
\def\BibTeX{{\rm B\kern-.05em{\sc i\kern-.025em b}\kern-.08em
    T\kern-.1667em\lower.7ex\hbox{E}\kern-.125emX}}

\usepackage{comment}
\usepackage{amsthm}
\usepackage{amsmath}

\usepackage{color}
\nolinenumbers
\hideLIPIcs

\newcommand{\kaushik}[1]{{\color{red}\underline{\textsf{Kaushik:}}} {\color{red} \emph{#1}}}

\title{Fast Deterministic Gathering with Detection on Arbitrary Graphs: The Power of Many Robots}

\author{Anisur Rahaman Molla} {Indian Statistical Institute, Kolkata, India}{molla@isical.ac.in}{https://orcid.org/0000-0002-1537-3462}{A.\,R. Molla was supported, in part, by DST INSPIRE Faculty Research Grant DST/INSPIRE/04/2015/002801, Govt. of India and ISI DCSW/TAC Project, file number E5412.}

\author{Kaushik Mondal} {Indian Institute of Technology Ropar, Ropar, India}{kaushik.mondal@iitrpr.ac.in}{https://orcid.org/0000-0002-9606-9293}{K. Mondal was partially supported by the FIST program of the Department of Science and Technology, Government of India, Reference No. SR/FST/MS-I/2018/22(C).}

\author{William K. Moses Jr.\footnote{Part of the work was done while William K. Moses Jr. was a post doctoral fellow at the University of Houston in Houston, USA.}} {Durham University, Durham, UK}{william.k.moses-jr@durham.ac.uk}{https://orcid.org/0000-0002-4533-7593}{W.\,K. Moses Jr.\ was supported in part by NSF grants CCF-1540512, IIS-1633720, and CCF-1717075 and in part by BSF grant 2016419.}

\authorrunning{A.\,R. Molla, K. Mondal, and W.\,K. Moses Jr.} 

\Copyright{Anisur Rahaman Molla and Kaushik Mondal and William K. Moses Jr.}

\begin{CCSXML}
<ccs2012>
   <concept>
       <concept_id>10003752.10003809.10010172</concept_id>
       <concept_desc>Theory of computation~Distributed algorithms</concept_desc>
       <concept_significance>500</concept_significance>
       </concept>
   <concept>
       <concept_id>10003752.10003809</concept_id>
       <concept_desc>Theory of computation~Design and analysis of algorithms</concept_desc>
       <concept_significance>300</concept_significance>
       </concept>
 </ccs2012>
\end{CCSXML}

\ccsdesc[500]{Theory of computation~Distributed algorithms}
\ccsdesc[300]{Theory of computation~Design and analysis of algorithms}

\keywords{Gathering, Mobile agents, Mobile robots, Distributed algorithms, Arbitrary graphs}

\begin{document}

\maketitle

\begin{abstract}
Over the years, much research involving mobile computational entities has been performed. From modeling actual microscopic (and smaller) robots, to modeling software processes on a network, many important problems have been studied in this context. Gathering is one such fundamental problem in this area. The problem of gathering $k$ robots, initially arbitrarily placed on the nodes of an $n$-node graph, asks that these robots coordinate and communicate in a local manner, as opposed to global, to move around the graph, find each other, and settle down on a single node as fast as possible. A more difficult problem to solve is gathering \textit{with detection}, where once the robots gather, they must subsequently realize that gathering has occurred and then terminate.

In this paper, we propose a \textit{deterministic} approach to solve gathering with detection for any \textit{arbitrary connected graph} that is \textit{faster} than existing deterministic solutions for even just gathering (without the requirement of detection) for arbitrary graphs. In contrast to earlier work on gathering, it leverages the fact that there are more robots present in the system to achieve gathering with detection faster than those previous papers that focused on just gathering. The state of the art solution for deterministic gathering~[Ta-Shma and Zwick, TALG, 2014] takes $\Tilde{O}(n^5 \log \ell)$ rounds, where $\ell$ is the smallest label among robots and $\Tilde{O}$ hides a polylog factor. 
We design a deterministic algorithm for gathering with detection with the following trade-offs depending on how many robots are present: (i) when $k \geq \lfloor n/2 \rfloor + 1$, the algorithm takes $O(n^3)$ rounds, (ii) when $k \geq \lfloor n/3 \rfloor + 1$, the algorithm takes $O(n^4 \log n)$ rounds, and (iii) otherwise, the algorithm takes $\Tilde{O}(n^5)$ rounds. The algorithm is not required to know $k$, but only $n$.

\end{abstract}

\section{Introduction}
\label{sec:intro}
A fundamental area of interest in distributed computing relates to when computational entities are allowed to move around in some fixed space and interact with each other. Research of this type has broad implications ranging from designing matter that can be programmed to respond to external stimuli~\cite{GCM05} to swarm robotics~\cite{STZLW12} to even modern self-driving car technology~\cite{BGCACFJBPMV21}; when considering tasks over a network, modeling processes via mobile robots can result in benefits for a host of applications such as distributed information retrieval, e-commerce, information dissemination, and workflow applications and groupware~\cite{LO99}. When that space is discretized, the popular framework of mobile robots on graphs is used to study important problems in this area. This framework is especially useful to study problems related to real world robots that must move in structures with rooms and corridors. It is also useful to study the behavior of software agents that may travel in the internet from computer to computer. 

The problems that are actually studied in this setting usually take on the form of either having the robots work together to find something in the graph (e.g., exploration~\cite{Bampas:2009,Cohen:2008,Das13,Dereniowski:2015,Fraigniaud:2005,MencPU17}, treasure hunting~\cite{MP15})
or form a certain configuration (e.g., gathering~\cite{CFPS12,CP02,DKLHPW11,P07}, dispersion~\cite{Augustine:2018,KMS2019,MMM21},  scattering~\cite{Barriere2009,ElorB11,Poudel18,Shibata:2016}, pattern formation \cite{SY99},
convergence \cite{CP04}). In this paper, we look at the problem of gathering with detection (a more difficult to solve variant of gathering), where multiple robots, initially arbitrarily placed on a graph, must find each other on the same node, become aware that gathering is complete and subsequently terminate.\footnote{In the past, gathering has also been referred to as the rendezvous problem when there are just two robots; for consistency, we use the term gathering throughout the paper.}
While this is an abstract problem, solutions to it may be adapted to the more concrete situation where you have multiple humans or robots trying to find each other in a discretized space such as in a maze with rooms and corridors between them or in cities with roads and intersections. Furthermore, gathering with detection can also be used as a subroutine when solving problems like exploration, scattering, dispersion, etc.\ as solving those problems when all robots are gathered is easier than solving them when robots are arbitrarily placed in the graph.

When trying to solve this problem, a fundamental difference when designing solutions arises from whether the robots have access to a source of randomness or not. While randomness may allow for faster solutions, we want to focus on the setting where robots do not have access to such a source of additional power. In this setting, we present an algorithm to solve gathering that is faster than the current state of the art with respect to time, assuming that all robots are initially awake. A key insight used in our algorithm is that the number of robots present in the graph may be considered an additional power to the algorithm designer and may be leveraged. Some example tasks where many robots might be used include exploring an area~\cite{Dereniowski:2015,FraigniaudGKP06,OS14}, dispersing the robots over a given area~\cite{Augustine:2018,KMS2019,MM22}, searching for treasure, etc. After the task is completed, we may want to gather the robots.

This is in contrast to previous work in the setting we consider, where a solution to the problem of gathering when just two robots are present is provided and it is assumed that it can be generalized to multiple robots; this approach is usually adopted because it has been shown that solutions to gathering of two robots can be easily extended to handle multiple robots~\cite{KM08}.
We believe that providing an example of such an approach to solving the problem will pave the way for more such faster solutions. To highlight this, notice that the problem of gathering was introduced more than 60 years ago~\cite{S60} and the previous best known deterministic solution~\cite{TZ07,TZ14}, which itself is more than 14 years old, takes $\Tilde{O}(n^5 \log \ell)$ time, where $\ell$ is the smallest label among all robots. Our solution for the more difficult problem of gathering with detection, when the number of robots is sufficiently large, can take as little as $O(n^3)$ time.

\subsection{Model}\label{subsec:model}

Consider an arbitrary connected undirected graph with $n$ nodes and $m$ edges where the nodes are unlabeled. Each node has port numbers corresponding to each of its edges. The ports of a node have unique labels in $[0,\delta-1]$ where $\delta$ is the degree of the node. Note that an edge between adjacent nodes may have different port numbers assigned to it by the corresponding nodes. 

There are a total of $k$ robots in the system.\footnote{In the literature, the computational entities have also been referred to as ``mobile agents''. We use ``mobile robots'' throughout for consistency.} Each robot has a unique identifier (ID), also called its label, assigned to it from the range $[1,n^b]$, where $b>1$ is a constant. Note that robots need not have the same length bit string for their IDs. Two robots co-located on the same node can communicate with each other via messages to each other. Typically, for the gathering problem, an assumption is made that when robots meet, they can become aware of this fact and this is considered sufficient. However, as noted in~\cite{EP17}, detection is intimately tied to how robots communicate with one another. In~\cite{EP17}, they considered that robots communicate via the beeping model~\cite{CK10}, which can be argued to be one of the weakest communication models and in~\cite{BDP20}, they adopt an even weaker notion that robots can only detect the number of other robots co-located with it at a node. We opt for the message passing framework where multiple robots, co-located on the same node in the same round, can exchange messages with each other in that round. This framework is also known as the Face-to-Face model and has been used to study other problems related to mobile robots including exploration~\cite{D19} and dispersion~\cite{AM18,MM22}. If a robot moves from one node to an adjacent node, it is aware of both port numbers assigned to the edge through which it passed. We assume that each robot knows the value of $n$ as well as its own unique ID. We highlight the fact that robots do not know the values of $k$, $b$, or any graph parameters other than $n$. 

We consider a synchronous system that proceeds in rounds. In each round, the robots perform the following sequence of operations: (i) robots co-located at the same node may communicate with each other via message passing; robots perform local computation as required (ii) each robot, if it decided to in step (i), performs a movement along an edge to an adjacent node. 
In the framework of mobile robots on a graph, the focus is on bounding the amount of movement required by the robots as this is considered to be far more expensive, time-wise, than local computation and communication. As such, a common assumption for problems studied in this setting (e.g., exploration, dispersion) is that local computation and communication is free and any amount of communication and computation may be performed in step 
(i) of a round. While this assumption is beneficial, to ensure the practicality of our algorithms, we design them so that any local computation performed is at most polynomial in $n$. 
We make the assumption that all robots are initially awake and start any prescribed algorithm at the same time. 

We now formally state the problems of gathering and gathering with detection.

\medskip
\noindent \textbf{Gathering:} Initially, $k$ robots are placed on the nodes of an $n$-node graph. Design an algorithm to be run by each robot such that eventually all $k$ robots meet at one node.

\medskip
\noindent \textbf{Gathering With Detection:} Initially, $k$ robots are placed on the nodes of an $n$-node graph. Design an algorithm to be run by each robot such that eventually all $k$ robots meet at one node and subsequently terminate.

\subsection{Our Contributions}\label{subsec:our-contrib}

Our main contribution is to showcase the first approach, to the best of our knowledge, of leveraging many robots to achieve fast deterministic gathering with detection of mobile robots in graphs. To be clear,  multiple robots can be used in gathering for other purposes such as to break symmetry when robots have no labels, but to the best of our knowledge they have not been leveraged to achieve a faster running time. Also, we are specifically talking about the problem of gathering of mobile robots on a graph. The problem of exploration of mobile robots on a graph has already seen this approach being used, resulting in very fast algorithms~\cite{Dereniowski:2015}.

In this paper, we propose a \textit{deterministic} approach to solve gathering with detection for any \textit{arbitrary graph} that is \textit{faster} than existing deterministic solutions for even just gathering (without the requirement of detection) for arbitrary graphs. In contrast to earlier work on gathering, it leverages the fact that there are more robots present in the system to achieve gathering with detection faster than those previous papers that focused on just gathering. The state of the art solution for deterministic gathering on arbitrary graphs~\cite{TZ14} takes $\Tilde{O}(n^5 \log \ell)$, where $\ell$ is the smallest label among robots and $\Tilde{O}$ hides polylog factors of $n$.  
We design a deterministic algorithm for gathering with detection with the following trade-offs depending on how many robots are present: (i) when $k \geq \lfloor n/2 \rfloor + 1$, the algorithm takes $O(n^3)$ rounds, (ii) when $k \geq \lfloor n/3 \rfloor + 1$, the algorithm takes $O(n^4 \log n)$ rounds, and (iii) otherwise (for any $k < \lfloor n/3 \rfloor + 1$), the algorithm takes $\Tilde{O}(n^5)$ rounds.
Each robot only requires $O(m \log n)$ bits of memory.\footnote{Note that we do \emph{not} require robots to know the value of $m$, we just require that they be provided sufficient memory.}

It is clear to see that in the situation where a sufficient number of robots are present, i.e., $k \geq \lfloor n/3 \rfloor + 1$, our algorithm solves gathering with detection asymptotically faster than any previous algorithm for even gathering. 
Furthermore, it should be noted that the purpose of requiring many robots is to ensure that there exist robots that are sufficiently close together in the initial configuration. As such, regardless of the number of robots available, if there exist two robots that are at a hop distance of $i$ apart, then the following holds. If $i = 0,1,2$, then the algorithm achieves gathering with detection in $O(n^3)$ rounds. If $i= 3, 4$, then the algorithm achieves gathering with detection in $O(n^4 \log n)$ rounds. For $i \geq 5$, the algorithm achieves gathering with detection in $\tilde{O}(n^5)$ rounds. 

\subsection{Technical Difficulties and High-Level Ideas}\label{subsec:tech-diff}
The hardest part of gathering is to ensure that the robots have an effective exploration strategy of the graph that results in robots eventually meeting. Normally, this is done by providing each robot with a sequence of moves to perform and a schedule to perform them such that eventually the whole graph is explored by each robot and there exists a sequence of rounds such that for every pair of robots, one of them will be stationary while the other one moves. The state of the art exploration strategy has been to use a Universal Exploration Sequence (UXS) and couple that with stationary rounds and moving rounds corresponding to bits in the ID string of a robot. While this approach works, the time it takes to use a UXS to explore the graph is high.

One of the key ideas we leverage is that when more robots are present on the graph, we can infer something about the distance between the two closest robots in the starting configuration. More specifically, we show that when $k \geq \lfloor n/c \rfloor + 1$, there exist two robots which are at most $2c-2$ hop distance from each other. 

This insight is crucial, but not sufficient by itself to achieve fast gathering. In fact, it is mentioned in Dessmark et al.~\cite{DFKP06} that when two robots start simultaneously and the distance between them is bounded by $D$, gathering can be achieved in $O(D \Delta^D \log \ell)$ rounds, where $\Delta$ is the maximum degree of the graph. However, their result was for two robots, and if we think about extending it to many robots, it appears that $D$ would reflect the maximum shortest hop distance between any two robots in the initial configuration. Since robots are thought to be placed by an adversary, this running time can be quite large, even when there are plenty of robots on the graph.\footnote{A possible adversarial placement is one that maximizes the running time, for example by keeping the robots as far away from each other as possible. For example, suppose all robots are in two groups, and these groups are $D-1$ distance from each other. If $\Delta = \Omega(n)$ and $D = \Omega(n)$, then the running time of Dessmark et al.'s algorithm would be exponential. For randomized placement, we may not get such bad configurations in every placement with high probability.}

Thus, we must find some way to cleverly leverage the fact that there exist two robots close together. First of all, we have these robots meet one another in a way similar to that of Dessmark et al.~\cite{DFKP06}, i.e., we have them perform a neighbor search procedure to find one another. Once these two robots meet, we then utilize them to gather the remaining robots.\footnote{Note that it may be the case that two \textit{or more} robots end up meeting. This is functionally the same as just having two robots meet so we ignore this situation when describing the high level idea; we explicitly handle this situation in the algorithm description.} We do this by first having these two robots construct a map of the graph using a simple token-explorer style algorithm where one robot acts as a movable token while the other robot grows its knowledge of the graph in balls of increasing radius centered at the node they first met at. Subsequently, these two robots go around collecting the other robots.

However, we must account for situations where there might be multiple such pairs of robots. We want to ensure that eventually, all robots are gathered together. When some robots move and some remain stationary, we must design a strategy to ensure all robots are ``captured'' and gather together. To this end, we have each pair of robots from the map construction step do the following. One stays put (to capture other moving robots) and the other robot moves around the whole graph in a systematic way (to capture other stationary robots). The exact nature of who captures who depends on the smallest label among each pair of robots, which acts as a sort of ID for the pair. If a robot from a lower ID pair comes into contact with a robot from a higher ID pair, the higher ID robot is captured by the lower ID robot, i.e., the higher ID robot stops whatever it was doing and ``follows'' the lower ID robot henceforth. In this manner, the pair with the lowest ID will capture all other robots and subsequently gather them together.

Finally, a word must be said about the catch-all case, i.e., when there are not enough robots or when there do not exist robots that are sufficiently close together to allow for faster gathering with detection as described above. In this situation, we default to using a Universal Exploration Sequence (UXS) to help robots find one another, similar to the idea in Ta-Shma and Zwick~\cite{TZ14}. However, we introduce our own twist to ensure that gathering \textit{with detection} occurs. Since the value of $n$, the number of nodes in the graph, is known to all robots, they can deterministically compute the same UXS that guarantees that they can explore the graph, say of length $T$. Now, we have robots perform the following sequence of actions in phases of $2T$ rounds. For each robot, it reads its label from least significant bit to the most significant bit. If the bit is $0$, then the robot waits at its current node for $T$ rounds and then explores the graph using the UXS for the next $T$ rounds. If the bit is $1$, the robot does things in the opposite order. If a robot comes across another robot with a higher ID, it starts to follow that higher ID robot, i.e., it implements choices according to the ID bits of the higher ID robot. If a robot runs out of bits, then it stays at its node for $2T$ rounds. In this time, if no robot with a higher ID (or higher ID than the robot it is following) shows up, the robot  decides that gathering has been achieved and subsequently terminates.

\subsection{Related Work}\label{subsec:rel-work}

Gathering is a very old problem, first studied more than 60 years ago in Schelling~\cite{S60}. Subsequent research on the problem has been divided broadly into whether the underlying medium of movement is a continuous space (e.g., Euclidean plane) or a discretized space (e.g., graphs). We focus on graphs; Flocchini~\cite{F19} is a good survey of work on gathering in a plane. A further divide occurs when considering whether the solutions require robots to have access to randomness or not. In the current work, we focus on deterministic solutions to gathering on a graph; Alpern~\cite{A02} is a good survey of randomized solutions (not just on a graph). Our focus in this paper is on the specific scenario where the robots have unique labels; Pelc~\cite{P19} is a good survey with a section devoted to when robots are anonymous. 

For the currently considered setting of deterministic gathering of labeled robots, there has been a bit of work on arbitrary graphs~\cite{DFKP06,KM08,TZ14}; there has also been work on gathering with detection of two robots on arbitrary graphs~\cite{EP17} and gathering with detection of several robots on arbitrary graphs~\cite{BDP20}. 
Dessmark et al.~\cite{DFKP06} studied gathering of two robots in a variety of settings including on arbitrary graphs with a startup delay of $\tau$ between the starting times of the two robots. In such a setting, time is measured from the moment the final robot wakes up. They presented the first deterministic algorithm to solve gathering in an arbitrary connected graph in time polynomial in $\tau$, $n$, and $\ell$, where $\ell$ is the smaller of the two robots' labels. Specifically, the running time of their algorithm was $O(n^5 \sqrt{\tau \log \ell } \log n + n^{10} \log^2 n \log \ell)$ rounds. They also showed that when it is assumed that robots start simultaneously, i.e., $\tau = 0$, gathering can be accomplished in $O(D \Delta^D \log \ell)$ rounds, where $D$ is the initial distance between the two robots and $\Delta$ is the maximum degree of the graph. Since robots are thought to be placed on the graph by an adversary, and there exist graphs with diameter of the order of $\Omega(n)$, this algorithm takes an exponential time in $n$ on some graphs.
Kowalski and Malinowski~\cite{KM08} show how to deterministically achieve gathering, even when there is a non-zero startup delay between robots, in $O(\log^3 \ell  + n^{15} \log^{12} n)$ rounds. In addition, they show how any solution for gathering of two robots can be adapted to solve gathering of an arbitrary number of robots ($>2$) in the same running time. 
The current state of the art, Ta-Shma and Zwick~\cite{TZ14} reduced the running time of gathering, again even if there is a non-zero startup delay, to $\Tilde{O}(n^5 \log \ell)$ rounds, where the $\Tilde{O}$ notation hides polylog factors of $n$. 
All these results were for gathering, but gathering with detection of two robots in this setting was studied by Elouasbi and Pelc~\cite{EP17}. They showed that when robots are allowed to communicate via the beeping model, gathering with detection can be solved deterministically in $\Tilde{O}(n^5 \log \ell)$ rounds. Bouchard et al.~\cite{BDP20} considered the even weaker model where robots may only detect the number of other robots co-located with them on a node and deterministically solved gathering with detection in time polynomial in $N$ and $\log \ell$, where $N$ is an upper bound on $n$ and is known to all robots, and in time exponential in $n$ when an upper bound on $n$ is not known to all robots. An important note is that in the aforementioned papers except~\cite{BDP20}, the robots did not know the value of $n$ initially while in the current paper, we make this assumption. However, in contrast to the papers on gathering~\cite{DFKP06,KM08,TZ14}, we solve the harder problem of gathering with detection. And in contrast to the paper on gathering with detection in the beeping model~\cite{EP17}, in the current paper, the robots do not know the value of $k$ initially (it can be argued that since~\cite{EP17} studies gathering specifically for two robots, this knowledge is implicit in their protocol design). And in contrast to~\cite{BDP20}, our algorithm is faster when there are more robots as theirs uses the algorithm from~\cite{TZ14}.

Regarding lower bounds for deterministic solutions to the problem of gathering with detection, we are not aware of any non-trivial lower bound. A trivial lower bound is $\Omega(n)$ (consider two (groups of) robots at either end of a line, it takes at least $n/2$ rounds for them to meet).

There has also been work on alternative settings, for example when the robots may experience crash faults or Byzantine faults, and alternative performance metrics, for example when solutions should be optimized both for time (as studied in the current paper) and for cost, i.e., the total number of edge traversals by all robots. These and other settings and metrics are mentioned in the survey by Pelc~\cite{P19}. 
In addition to work on arbitrary graphs, another approach to studying the problem has been to restrict the input graph to a specific class, such as a ring. A good resource for solutions on the ring (and a few other graph classes) is the book by Kranakis et al.~\cite{KKM10}.

This approach of leveraging the availability of multiple robots to achieve a faster solution has been used in other problems in the mobile robots on a graph setting, for example in exploration~\cite{Dereniowski:2015,DMNSS17}.

\section{Gathering on Arbitrary Graphs}\label{sec:algorithm}
In this section, we present our main algorithm for gathering with detection. We first present a gathering algorithm using universal exploration sequence (UXS) in Section~\ref{sec:gathering-via-uxs}, which works for any number of robots and detects the gathering. However, the UXS based algorithm takes long time, $\tilde{O}(n^5)$ rounds. We use this algorithm as a subroutine later in the main algorithm.  

Then we present a faster gathering algorithm with detection. For the faster algorithm, 
we first show how to solve the problem on a slightly limited set of input configurations of robots on nodes, and then show how to extend this algorithm to work for all input configurations. We interchangeably use the terms `distribution' and `configuration' of the robots.

\subsection{Gathering with Detection using Universal Exploration Sequence (UXS)}\label{sec:gathering-via-uxs}
In this section we show that gathering can be achieved with detection (i.e, robots can detect that gathering is completed and subsequently they can terminate) for any number of robots on an arbitrary $n$-node graph, where robots have the knowledge of $n$ but do not know $k$. For this we use the following result on universal exploration sequence (UXS), which implicitly follows from \cite{TZ14}. 

{\em A single robot with the knowledge of $n$ in an arbitrary anonymous graph can compute a bounded size UXS, and using the UXS it can explore the graph in time $\tilde{O}(n^5)$. As the length of the sequence is bounded, the robot terminates.} Let $T$ be an upper bound of the exploration time, i.e.,  $ T= \tilde{O}(n^5)$, for some large constant, and let $M$ be an upper bound on the memory required to implement the UXS.

Using the above result and the ID bits of robots, we can gather any number of robots with detection deterministically. Robots may have different lengths of ID bit-string. Further, robots only know $n$ and no other parameters. The gathering algorithm works as follows.  

\medskip
Initially, the robot(s) starting from same node form a group. If there is only one robot at some node,  it forms a singleton group. Every robot in a group follows the largest ID robot during graph exploration. Every robot that is not following any robot (i.e., the largest ID robot in a group or a singleton robot), reads their ID bits (one by one) from the least significant bit to the most significant bit and does the following in parallel. 

\begin{enumerate}[(i)]
\item If \textbf{the bit is $1$}, the robot explores the graph using UXS for $T$ rounds, and then waits at the node where it finishes the exploration for the next $T$ rounds. Thus, it takes $2T$ rounds. Recall that the other robots in the group (if any) simply follow it. If two or more groups meet during any of those $2T$ rounds, the respective groups merge. The robots start following the largest ID robot of the merged group thereafter.

\item \textbf{If the bit is $0$}, the robot waits at the current node for the first $T$ rounds, and for the next $T$ rounds, it explores the graph using UXS. It takes $2T$ rounds. Similarly, if two or more groups meet during any of these $2T$ rounds, the respective groups merge. The robots start following the largest ID robot of the merged group thereafter.

\item If the (largest ID) robot finishes scanning all its ID bits, it waits for another $2T$ rounds.  If it does not meet any group during this time, it terminates. Else, the respective groups merge and the robot starts following the largest ID robot of the merged group thereafter. 

\end{enumerate}

The high level idea behind such type of exploration and waiting for each bit is to make sure that robots can terminate. Consider a robot that is waiting for $2T$ rounds. In this period if no other robot meets this waiting robot, that means there is no robot that is still working on its bits, else that robot must meet this waiting robot irrespective of that robot working on its $0$ or $1$ bit.      

We show that the above procedure correctly gathers any number of robots in $O(T\log L)$ rounds, where $L$ is the largest ID of the robots. In addition, the robots detect that the gathering is completed and subsequently terminate. The correctness of the algorithm follows from the following lemmas.

\begin{lemma}\label{lem:2T-terminate1}
In step (iii), a robot say, $r$, which is not following any robot, is waiting for $2T$ rounds. The robot $r$ meets a group of exploring robots while waiting if and only if the length of the ID of the largest ID robot, say $r'$, in that group is larger than the length of the ID of $r$. 
\end{lemma}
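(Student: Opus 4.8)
The plan is to prove both directions of the biconditional by reasoning about the timing structure of the $2T$-round blocks. First I would fix notation: let $|\mathrm{ID}(r)|$ denote the bit-length of robot $r$'s label, and recall that every non-following robot reads its bits from least significant to most significant, spending exactly $2T$ rounds per bit, and that all robots start simultaneously at round $0$. Hence robot $r$ enters step (iii) (the terminal $2T$-round wait) precisely at the start of round $2T \cdot |\mathrm{ID}(r)| + 1$, and robot $r'$ is still processing bits during round interval $[1, 2T\cdot|\mathrm{ID}(r')|]$. The key structural fact I would isolate is this: during \emph{any} block of $2T$ consecutive rounds that is aligned to a bit (i.e.\ rounds $2Tj+1,\dots,2T(j+1)$), a robot that is processing a bit in that block spends $T$ of those rounds exploring the graph via the UXS; since the UXS guarantees that a single run of length $T$ visits every node of the graph, that robot must pass through $r$'s node at some point during those $2T$ rounds, regardless of whether the bit is $0$ or $1$ (the only difference is whether the explore-half comes first or second). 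This is exactly the intuition stated informally right after the lemma, and making it precise is the crux.

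For the ``if'' direction ($|\mathrm{ID}(r')| > |\mathrm{ID}(r)|$ implies $r$ meets $r'$'s group while waiting): when $r$ begins its terminal wait at round $2T|\mathrm{ID}(r)|+1$, the group containing $r'$ still has at least one more full bit-block to process, namely the block $[2T|\mathrm{ID}(r)|+1,\ 2T|\mathrm{ID}(r)|+2T]$, which is contained in $r$'s waiting window $[2T|\mathrm{ID}(r)|+1,\ 2T|\mathrm{ID}(r)|+2T]$. (If $|\mathrm{ID}(r')|$ exceeds $|\mathrm{ID}(r)|$ by more than one, there are even more such blocks, but one suffices.) During that block the leader of $r'$'s group runs the UXS for $T$ rounds and therefore visits $r$'s node; at that moment the two groups are co-located and merge. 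One subtlety to handle: I must argue that $r'$'s group is still ``active'' (its leader still processing bits according to $r'$'s label) at that time — i.e.\ that $r'$ or its current leader hasn't been absorbed into some even-larger-ID group that has already stopped. But absorption only ever replaces a leader by a \emph{larger}-ID leader, whose label is at least as long is not guaranteed — so here I'd need the companion observation (presumably Lemma to follow, or an easy sub-claim) that the group containing $r'$ is led, throughout $[1,2T|\mathrm{ID}(r')|]$, by some robot still executing bit-steps; in any case, whichever robot leads it during the relevant block is processing a bit and runs the UXS, so the meeting still occurs. I would phrase the argument to depend only on ``the leader of that group is processing a bit during that $2T$-block,'' which holds because the block index $|\mathrm{ID}(r)| < |\mathrm{ID}(r')|$ is within the processing range of $r'$, and leaders only get replaced by higher-ID (hence still-or-longer-processing) robots.

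For the ``only if'' direction (contrapositive: $|\mathrm{ID}(r')| \le |\mathrm{ID}(r)|$ implies no exploring group meets $r$ during its wait): consider any robot $r'$ that is not following anyone, with $|\mathrm{ID}(r')| \le |\mathrm{ID}(r)|$. Then $r'$ finishes its own bits by the end of round $2T|\mathrm{ID}(r')| \le 2T|\mathrm{ID}(r)|$, so by the time $r$'s waiting window opens at round $2T|\mathrm{ID}(r)|+1$, robot $r'$ has \emph{also} entered step (iii) (or terminated), i.e.\ it is no longer exploring — it is itself waiting (or gone). Since this holds for every non-following robot of label-length at most $|\mathrm{ID}(r)|$, no group is exploring during $[2T|\mathrm{ID}(r)|+1, 2T|\mathrm{ID}(r)|+2T]$, so $r$ meets no exploring group. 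Here the only gap to close carefully is the edge case $|\mathrm{ID}(r')| = |\mathrm{ID}(r)|$ with $r' \neq r$: both enter step (iii) at the same round, both merely wait, and neither moves, so indeed no meeting occurs during the wait (a meeting would require one of them to move). I expect the ``if'' direction — specifically pinning down that the group around $r'$ is genuinely led by an actively-exploring robot during the critical block, in the face of possible merges — to be the main obstacle; everything else is bookkeeping on the round arithmetic $2T \cdot (\text{bit index})$.
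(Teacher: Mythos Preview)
Your proposal is correct and follows essentially the same route as the paper: split on whether $|\mathrm{ID}(r')|$ is less than, equal to, or greater than $|\mathrm{ID}(r)|$; in the first two cases $r'$ has already stopped processing bits by the time $r$'s terminal $2T$-wait begins, so no exploring group can reach $r$; in the last case $r'$ still has at least one bit to process during $r$'s wait, and since every $2T$-block contains a full UXS pass (regardless of whether the bit is $0$ or $1$), $r'$'s group visits $r$'s node. The paper's proof says exactly this, in about a quarter of the space and without the explicit round arithmetic.

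You are overthinking the absorption issue in the ``if'' direction. By the lemma's own phrasing, $r'$ is \emph{the largest-ID robot in that group}, so $r'$ is the leader of the group in question; there is nothing larger for it to be absorbed into without changing which group we are talking about. Hence during the block $[2T|\mathrm{ID}(r)|+1,\,2T|\mathrm{ID}(r)|+2T]$ the leader of that group is $r'$ itself, processing bit number $|\mathrm{ID}(r)|+1$ of its own label, and the UXS argument goes through directly --- no auxiliary claim needed. Separately, your parenthetical ``a larger-ID leader, whose label is at least as long is not guaranteed'' is actually false in this model: labels are positive integers in $[1,n^b]$, so a larger label always has bit-length at least that of a smaller one. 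So even under your more elaborate tracking of merges, the concern dissolves.
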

\begin{proof}
First we show the `only if' part.

Let us assume that the length of ID of $r'$ is equal to the length of ID of $r$. Then both $r$ and $r'$ finish scanning their ID bits at the same time and both of them must wait at the same time. That is, while $r$ is waiting, $r'$ also must be waiting.

Now let us assume that the length of ID of $r'$ is less than the length of ID of $r$. As $r'$ is the largest ID robot in its group, it must have finished scanning its ID before $r$ does. So it is not possible that $r'$ comes and meets $r$ while $r$ is waiting.

Let us now argue the `if' part. Let $r'$ be the robot whose ID length is more than the ID length of $r$ and $r$ is waiting for $2T$ rounds. That is, $r'$ still has some bits left (at least one) to scan. Therefore, $r'$ must do an exploration of the graph, irrespective of the fact that its corresponding bit is $0$ or $1$. So, $r'$ meets $r$ during its waiting time of $2T$ rounds.   
\end{proof}

\begin{lemma}\label{lem:2T-terminate2}
In step (iii), let $r$ be a robot that is not following any other robot. If $r$ waits for $2T$ rounds and no robots meet $r$ at this time, the gathering is completed. 
\end{lemma}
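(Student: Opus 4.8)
The plan is to show that if $r$ completes its $2T$-round wait in step~(iii) without being met, then at the end of that wait all robots belong to a single group; since a group is a set of co-located robots that move in unison, this is exactly gathering.

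First I would record the structural facts I need. Groups only merge, never split, so the number of groups is non-increasing---once it is $1$ it stays $1$. Following is ``sticky'': once a robot follows another, every later merge replaces the followed robot by one of strictly larger ID, so a robot not following anyone at round $t$ has never followed anyone, i.e.\ it has been scanning its own ID bits since round $0$; call such a robot a \emph{leader}, and note the robot $r_{\max}$ of globally largest ID is a leader at every round. For timing: since all robots start together and every bit takes exactly $2T$ rounds, a leader with an ID of bit-length $a$ processes its $i$-th least-significant bit during rounds $[2T(i-1),2Ti)$, performing a full UXS sweep of the whole graph during either the first $T$ or the last $T$ of those rounds and sitting still during the other $T$; it enters step~(iii) at round $2Ta$ and then sits still for $2T$ rounds. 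Finally, a strictly larger ID has bit-length at least as large, so every leader other than $r_{\max}$ has bit-length at most $|\mathrm{ID}(r_{\max})|$.

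The heart of the proof is to show $r$ is the \emph{unique} leader at the round $\rho = 2T\,|\mathrm{ID}(r)|$ at which its wait begins (write $a = |\mathrm{ID}(r)|$); a unique leader forces a unique group. Suppose $r'\ne r$ is also a leader at round $\rho$, of bit-length $a'$. If $a'>a$: during $r$'s wait window $[\rho,\rho+2T)$ the leader $r'$ is processing its bit at position $a+1$, hence sweeps the whole graph inside that window while $r$ is motionless at its node---so $r'$ meets $r$, a contradiction. If $a'=a$: then $r$ and $r'$ have been leaders since round $0$, and their distinct equal-length IDs (both with top bit $1$) first differ at some position $j$ with $1\le j\le a-1$; in the first $T$ rounds of bit~$j$, whichever of them has a $1$ there sweeps the graph while the other sits still, so they meet and merge before round $\rho$---contradicting that both are leaders at $\rho$. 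If $a'<a$: then $r_{\max}$, always a leader and of bit-length at least $a>a'$, has a bit at position $a'+1$ and processes it during $r'$'s step-(iii) wait $[2Ta',2Ta'+2T)$, which lies entirely before round $\rho$; there $r_{\max}$ sweeps the graph while $r'$ sits still, so $r_{\max}$ absorbs $r'$ before round $\rho$ and $r'$ is a follower at $\rho$, a contradiction. Hence at round $\rho$ there is a single leader and thus a single group, so all $k$ robots are co-located and gathering is complete. (Should some robot have terminated at a round earlier than $\rho$, one instead applies the statement inductively to the \emph{first} robot ever to terminate---before its termination every robot is still active, so the argument above applies to it verbatim---and concludes gathering was already complete then, hence also at round $\rho$.)

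The ``exploration sweep meets a stationary robot'' primitive used in all three cases is immediate from the UXS guarantee: a $T$-round sweep visits every node, and a waiting robot stays put. I expect the only real work to be the bookkeeping around the merging dynamics---checking in each case that the sweeping leader really is a leader throughout the relevant window and that its sweep sub-window sits inside the other robot's stationary window, and handling the degenerate sub-cases (very short IDs, and robots that have already terminated). Once that case analysis is organized cleanly, nothing further is needed.
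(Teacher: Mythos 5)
Your proof is correct and follows essentially the same route as the paper's: a case analysis on the ID bit-lengths of the other non-following robots, where longer lengths are excluded because such a robot would sweep the graph and meet the waiting $r$ (the paper's Lemma~\ref{lem:2T-terminate1}), equal lengths force a merge at the first differing bit, and shorter lengths are absorbed during their own $2T$-round waits. The only cosmetic differences are your ``unique leader'' packaging and your use of the globally largest-ID robot $r_{\max}$ to absorb shorter-ID leaders where the paper invokes $r$ itself; both arguments are at the same level of rigor.
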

\begin{proof}
If no robot meets $r$ during its wait, it follows from Lemma~\ref{lem:2T-terminate1} that there is no robot with larger ID present in the graph which is still working on some of its bits. This implies that all the other robots which are not following any robots, either have same length ID as $r$, or have lesser length ID than that of $r$. 

First consider the case where a robot $r'$ which is not following any robot and has same length ID as of $r$. This implies that both $r$ and $r'$ finished working on their bits at the same time. Since ID of $r$ and $r'$ must differ at some bit, they must have gathered during the working on that particular bit.

Now consider the case where the robot $r'$ which is not following any robot and have lesser ID length than that of $r$. Since $r'$ is not following anybody, it must finished scanning all its bits before $r$ did. So when $r'$ waited for $2T$ rounds, it must have met by $r$ (see Lemma~\ref{lem:2T-terminate1}).  

This completes the proof.
\end{proof}

\begin{lemma}\label{lem:2T-terminate3}
In step (iii), let $r$ be a robot that is not following any other robot. If $r$ terminates after $2T$ rounds, the termination is correct.
\end{lemma}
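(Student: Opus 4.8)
The plan is to derive Lemma~\ref{lem:2T-terminate3} directly from Lemma~\ref{lem:2T-terminate2}, with Lemma~\ref{lem:2T-terminate1} supplying the safety argument. By the description of step~(iii), a robot $r$ that is not following any other robot terminates only if the following all hold: it has finished scanning all of its ID bits, it has then waited for $2T$ consecutive rounds, and during that wait no group of robots met it. But this is precisely the hypothesis of Lemma~\ref{lem:2T-terminate2}, so at the round in which $r$ decides to terminate, gathering is already complete, i.e., all $k$ robots occupy $r$'s node. Hence $r$'s decision to stop is not premature, which is exactly the content of ``the termination is correct.''

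Two supplementary points would round out the argument. First, I would observe that at the termination round every other robot is co-located with $r$ and is following $r$: once all robots share a node, all groups have merged into a single group, and by the merging rule the merged group follows its largest-ID member; since $r$ follows nobody, $r$ must be that member, so all other robots follow $r$, and by the rule that a follower halts when the robot it follows halts, they terminate together with $r$ in the gathered configuration. Second, I would dispatch the trivial edge case $k=1$: a singleton $r$ that never meets anyone scans all its bits, waits $2T$ rounds, meets no one, and terminates — correctly, since a lone robot is by definition already gathered.

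The only real obstacle is ruling out that some robot reappears \emph{after} $r$ terminates and drags robots off the gathering node, thereby invalidating the detection. This is exactly why the $0/1$ bit schedule is designed so that every $2T$-round block contains a full $T$-round exploration of the graph: by Lemma~\ref{lem:2T-terminate1}, any robot that is still active — i.e., still has an unscanned ID bit, and hence is not following anyone it will stay with — must explore the entire graph during $r$'s $2T$-round wait and therefore meet $r$. Since by hypothesis no robot met $r$ during that wait, no active robot can remain, so no movement can occur after $r$ terminates and the gathered configuration is stable. Robots that have already stopped scanning bits but have the same ID length as $r$ are handled as in the proof of Lemma~\ref{lem:2T-terminate2}: their IDs differ from $r$'s in some bit, so they must have met $r$ while that bit was being processed, and thus are in $r$'s group as well.
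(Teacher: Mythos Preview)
Your proposal is correct and follows the same approach as the paper: the paper's proof is a single sentence invoking Lemma~\ref{lem:2T-terminate2} (``$r$ terminates only after gathering is completed''), and your first paragraph does exactly this. Your additional observations about followers terminating alongside $r$, the $k=1$ edge case, and post-termination stability are sound but go beyond what the paper records for this lemma.
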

\begin{proof}
This is straightforward from Lemma \ref{lem:2T-terminate2}, since $r$ terminates only after gathering is completed.
\end{proof}
\begin{lemma}\label{lem:2T-terminate4}
Let $r$ be a robot that is following some other robot. The robot $r$ always terminates correctly.
\end{lemma}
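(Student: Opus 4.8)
The plan is to show that the robot $r$'s \emph{leader} (the robot it currently follows) stabilizes to some robot $r^*$, that from then on $r^*$ behaves as a robot ``not following any other robot'', so that Lemmas~\ref{lem:2T-terminate1}--\ref{lem:2T-terminate3} govern $r^*$'s termination, and finally that $r$ terminates together with $r^*$ and hence correctly.

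First I would record two structural facts about the protocol. (a) A following robot never stops following: a robot follows only when it is not the largest ID in its group, and whenever its group merges with another it adopts the largest ID of the merged group as its new leader, which is still not itself; so $r$ is a follower in every subsequent round. (b) The leader of $r$ changes only when $r$'s group merges with a group whose leader has a strictly larger ID, hence the ID of $r$'s leader is strictly increasing over time. Since all IDs lie in the finite range $[1,n^b]$ and there are finitely many robots, the leader of $r$ changes only finitely many times. Let $r^*$ be the robot that $r$ follows from some round $t_0$ onward, permanently.

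Next I would argue that $r^*$ is never captured in any round $\ge t_0$: if it were, then $r$, being co-located with and following $r^*$, would switch to the capturing robot's leader, contradicting the permanence of $r^*$. Hence in every round $\ge t_0$ the robot $r^*$ is a robot ``not following any other robot'' in the sense of the algorithm, so it keeps scanning its own ID bits; since its ID is finite, it eventually reaches step (iii) and waits for $2T$ rounds. During this wait it meets no exploring group, because if it did, Lemma~\ref{lem:2T-terminate1} would force that group's leader to have strictly longer (hence larger) ID, and the two groups would merge, capturing $r^*$ --- which we have ruled out. Therefore $r^*$ terminates, and by Lemmas~\ref{lem:2T-terminate2} and~\ref{lem:2T-terminate3} this happens only after gathering is completed, with all robots, and in particular $r$, co-located at $r^*$'s node.

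To finish, I would invoke that a following robot terminates exactly when its current leader does: in the round in which $r^*$ decides that gathering is complete, $r$ is co-located with $r^*$, so $r^*$ informs the co-located followers (message passing is available within a node), and $r$ terminates in that same round; since gathering is already complete, $r$'s termination is correct. I expect the main obstacle to be the first part --- formally establishing that the chain of leaders of $r$ is finite and that its limit $r^*$ is genuinely a non-following robot to which Lemmas~\ref{lem:2T-terminate1}--\ref{lem:2T-terminate3} apply; once that is pinned down, the remainder is routine bookkeeping about who is co-located with whom in the terminating round.
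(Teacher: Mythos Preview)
Your proposal is correct and follows the same approach as the paper: reduce to Lemma~\ref{lem:2T-terminate3} by observing that the robot $r$ is co-located with, and terminates together with, a non-following leader to which that lemma applies. The paper's own proof is a two-line version of this (``$r$ follows $r'$, $r'$ is not following anyone, $r'$ terminates correctly by Lemma~\ref{lem:2T-terminate3}, hence $r$ does too''); your added care about the leader chain stabilizing to some $r^*$ simply makes explicit what the paper leaves implicit, namely that although $r$'s leader may change under merges, the sequence of leader IDs is strictly increasing and hence eventually constant.
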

\begin{proof}
Lemma~\ref{lem:2T-terminate3} says that each robot that is not following any robots terminates correctly. Now $r$ is following some robot, say $r'$, which means that  $r'$ is not following any other robot. So $r$ terminates only when $r'$ terminates. Since $r'$ terminates correctly, $r$ terminates correctly. 
\end{proof}

\begin{lemma}\label{lem:2T-terminate5}
The algorithm runs for $O(T\log L)$ rounds.
\end{lemma}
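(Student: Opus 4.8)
The plan is to bound the whole running time by the round at which the robot with the globally largest ID terminates, and then to check that every other robot has joined that robot by then. Call this distinguished robot $r^{\star}$ and let $L$ be its ID; write $|L|$ for the bit length of $L$, so $|L| = O(\log L)$. First I would observe that $r^{\star}$ never follows anybody: whenever groups merge, all robots of the merged group follow its largest-ID robot, and $L$ being the global maximum makes $r^{\star}$ always that robot. Hence $r^{\star}$ runs its own schedule uninterrupted --- it scans its $|L|$ ID bits at $2T$ rounds each, and then, by step~(iii), waits $2T$ more rounds. Since no robot has a strictly longer ID than $r^{\star}$ (a larger numerical ID has a weakly larger bit length), Lemma~\ref{lem:2T-terminate1} tells us $r^{\star}$ meets no group during that last wait, so it terminates exactly at round $2T(|L|+1) = O(T\log L)$.

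The remaining work is to show that by round $2T(|L|+1)$ every other robot is already in $r^{\star}$'s group, so that (by Lemma~\ref{lem:2T-terminate4}) they all terminate at that round too. I would first isolate the monotonicity invariant that ``being a leader'' (not following anyone) is prefix-closed in the round index: groups only merge, and a robot that starts following someone in a merge never leads again. Then I would take an arbitrary robot $s \neq r^{\star}$ that is still a leader at round $2T(|L|+1)$ and derive a contradiction. If $|s| < |L|$, then $s$ enters its step-(iii) wait at round $2T|s|$, and Lemma~\ref{lem:2T-terminate1} (the ``if'' direction, with $r^{\star}$ as the longer-ID robot that is still scanning) shows $s$ is met --- hence absorbed into $r^{\star}$'s group --- before round $2T(|s|+1) < 2T(|L|+1)$. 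If $|s| = |L|$ but $s \neq r^{\star}$, then $s$ and $r^{\star}$ have equal-length bit strings with the same leading bit, so they differ at some position $p < |L|$; in phase $p$ one of them explores the whole graph (by the UXS property) while the other waits at a fixed node, so they meet and $s$ joins $r^{\star}$'s group by the end of phase $p < |L|$. Either way $s$ stops being a leader before round $2T(|L|+1)$, a contradiction; so $r^{\star}$ is the unique leader at that round, which forces all robots into $r^{\star}$'s group.

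I expect the main obstacle to be the bookkeeping around merges rather than any single hard step: one must nail down the leader-monotonicity invariant, and argue that a shorter- or equal-length leader, once it is caught, is caught by a group \emph{containing} $r^{\star}$ (this uses that $r^{\star}$ is always the unique leader of its own group and performs a full $T$-round exploration in every phase until it is done). The other thing to spell out is synchronization --- all robots start together and each phase is exactly $2T$ rounds --- which is exactly what makes the window ``$s$'s $2T$-round wait coincides with one full phase of $r^{\star}$, and that phase contains a complete exploration'' valid. After these points the bound is the routine estimate $2T(|L|+1) = O(T\log L)$ using $|L| = O(\log L)$.
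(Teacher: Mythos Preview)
Your proposal is correct and the core idea---track the globally largest-ID robot $r^{\star}$, observe it never follows anyone, and count $2T$ rounds per ID bit plus one final $2T$-round wait---is exactly what the paper does. The paper's proof stops there, in three sentences.

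The difference is that your second paragraph re-derives something already on the shelf. Your ``remaining work'' (every other robot is in $r^{\star}$'s group when $r^{\star}$ terminates) is precisely the content of Lemma~\ref{lem:2T-terminate2} applied to $r=r^{\star}$: that lemma already shows that when a leader waits $2T$ rounds and meets nobody, gathering is complete. Combined with Lemma~\ref{lem:2T-terminate4}, this immediately gives that all followers terminate with $r^{\star}$. So your case split on $|s|<|L|$ versus $|s|=|L|$, while correct, duplicates the argument of Lemma~\ref{lem:2T-terminate2} rather than invoking it. The paper simply cites the earlier lemmas implicitly and records the arithmetic $2T(|L|+1)=O(T\log L)$; you can do the same and drop the second paragraph entirely.
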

\begin{proof}
Let $r$ be the largest ID robot whose ID length is $O(\log L)$ as we assume $L$ is the ID. According to our algorithm, $r$ must finish working on all of its bits. For each bit, it takes $2T$ rounds. After finishing all the bits,  $r$ waits for another $2T$ rounds and then terminates. This makes the time complexity $O(T\log L)$ rounds. 
\end{proof}

Recall that $T = \tilde{O}(n^5)$. Thus, the time complexity of this UXS based algorithm becomes $\tilde{O}(n^5)$ rounds, since we assume the robots' ID lies in the range $[1, n^b]$ for some constant $b$.  Lemmas~\ref{lem:2T-terminate3}--\ref{lem:2T-terminate5} imply the following result.

\begin{theorem}\label{thm:main-via-uxs}
Consider a connected, undirected, anonymous graph of $n$ nodes and some robots (any number) are distributed over the nodes arbitrarily. Robots only know the value of $n$. Then there is a deterministic algorithm that gathers all the robots at some node with detection in $\tilde{O}(n^5)$ rounds, where the IDs of the robots lie in the range $[1, n^b]$ for some constant $b$. 
The algorithm requires that robots know only the value of $n$. Each robot requires $O(M + \log n)$ bits of memory, where $M$ is the memory required to implement the UXS.  
\end{theorem}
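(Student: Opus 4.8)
The plan is to assemble the theorem from the stated algorithm together with Lemmas~\ref{lem:2T-terminate1}--\ref{lem:2T-terminate5}; the only genuinely new work is to argue that gathering itself actually occurs, since those lemmas as stated certify correct \emph{termination} and the round bound but do not by themselves rule out the merging dynamics stalling. I would begin by fixing $r^\star$ to be the robot holding the globally largest ID $L$, and letting $\lambda^\star$ be the bit-length of $L$. By construction $r^\star$ never follows another robot, so it executes the schedule in full: it reads each of its $\lambda^\star$ ID bits in a dedicated $2T$-round phase and, after the last one, enters the final $2T$-round wait of step (iii). The core claim is that by the end of $r^\star$'s bit-scanning, i.e.\ by round $2T\lambda^\star$, every robot lies in the group led by $r^\star$.

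To prove the core claim I would use two structural observations: the ID of the leader of any fixed robot's current group is non-decreasing over time, strictly increases at each merge, and is bounded by $L$, so a robot that once stops being a leader never leads again. Suppose, for contradiction, that at round $2T\lambda^\star$ some robot $r$ is not in $r^\star$'s group; let $r'$ be the leader of $r$'s group at that time, so $r' \neq r^\star$ and, by the observation, $r'$ led its group throughout $[0, 2T\lambda^\star]$. If the bit-length of $r'$'s ID is strictly less than $\lambda^\star$, then $r'$ finished scanning before $r^\star$ and hence is in a $2T$-round wait during a window in which $r^\star$ is still scanning a bit, so $r^\star$ explores the whole graph via the UXS for $T$ of those rounds, visits $r'$'s node, and the groups merge (this is exactly the ``if'' direction of Lemma~\ref{lem:2T-terminate1}) --- contradicting that $r'$ still leads. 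If instead $r'$'s ID has bit-length $\lambda^\star$, its bit-string differs from that of $L$ at some position $j \le \lambda^\star$; in the common phase in which both leaders read bit $j$, one waits during the first $T$ rounds while the other explores, and symmetrically during the second $T$ rounds, so the two leaders meet and merge during phase $j$ --- again a contradiction. Hence the core claim holds.

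Given the core claim, when $r^\star$ performs its final $2T$-round wait no other group exists, so no group meets it; Lemma~\ref{lem:2T-terminate2} then gives that gathering is complete, and $r^\star$ terminates. Its termination is correct by Lemma~\ref{lem:2T-terminate3}, and every following robot terminates correctly and at the same node by Lemma~\ref{lem:2T-terminate4}, so the algorithm solves gathering with detection. For the running time, Lemma~\ref{lem:2T-terminate5} gives $O(T\log L)$ rounds, and since $L \le n^b$ we have $\log L = O(b\log n) = O(\log n)$ while $T = \tilde{O}(n^5)$, so the bound is $\tilde{O}(n^5)$ rounds. For memory, each robot keeps the structures needed to run the UXS ($M$ bits by assumption), a counter for its position within the current $2T$-round phase ($O(\log T) = O(\log n)$ bits), a pointer into the ID bit-string it is currently following, and that ID itself ($O(\log n)$ bits since IDs are at most $n^b$), for a total of $O(M + \log n)$ bits.

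I expect the second paragraph to be the main obstacle: one must rule out the scenario where some group keeps ``just missing'' $r^\star$, which requires simultaneously controlling the finiteness of the chain of absorptions (via the monotone, $L$-bounded leader IDs) and their timing against the rigid $2T$-round phase structure of $r^\star$'s schedule. Everything else is bookkeeping layered on the already-established lemmas.
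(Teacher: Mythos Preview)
Your proposal is correct and takes essentially the same route as the paper. The paper's proof of the theorem is a one-liner citing Lemmas~\ref{lem:2T-terminate3}--\ref{lem:2T-terminate5}; the actual gathering argument lives inside the proof of Lemma~\ref{lem:2T-terminate2}, and your ``core claim'' with its two-case analysis (leader with shorter ID length; leader with the same ID length, differing at some bit) is exactly that proof, reorganized around the globally largest-ID robot $r^\star$ rather than around a generic waiting leader $r$. Your stated worry that the lemmas ``do not by themselves rule out the merging dynamics stalling'' is a little off: Lemma~\ref{lem:2T-terminate1} applied to $r^\star$ shows that no group meets it during its final $2T$-round wait, and then Lemma~\ref{lem:2T-terminate2} directly gives that gathering is complete at that moment --- the merging dynamics are handled inside that lemma's proof, not left open. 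The monotonicity observation you make explicit (a robot that stops leading never leads again) is implicitly used in the paper when it assumes the other leader $r'$ was scanning its own bits at the relevant earlier phase; making it explicit is a nice clarification but not a new idea.
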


\medskip 
In the following two sections, we present a faster gathering algorithm. We first consider a setting where the initial distribution of the robots is {\em undispersed}, defined to be a distribution of the robots over the nodes where there is at least one node with two or more robots. After describing how to solve the problem of gathering in this setting in Section~\ref{sec:un-dispersed-config}, we then show in Section~\ref{sec:dispersed-config} how to extend our results to not only handle undispersed initial distributions of robots but also {\em dispersed} distributions, defined to be the situation where each node initially holds at most one robot. Notice that this distinction is useful in situations where the number of robots $k$ is at most the number of nodes $n$, i.e. $k \leq n$. When $k > n$, it is easy to see that there will always exist one node with more than one robot on it (also called as {\em Pigeonhole principle}) and as such any initial setup where $k>n$ is a undispersed distribution. Since the union of undispersed and dispersed initial configurations is the set of all input configurations, the algorithm presented in Section~\ref{sec:dispersed-config} works for all input configurations. We show that the algorithm from dispersed distribution gathers the robots {\em faster} if either at least two robots located at two nearby nodes (neighbors or a few hop distance away) or there are `many' robots. This validates the intuitive fact that the dispersed configuration is the worst configuration for the gathering problem. 

\subsection{Gathering with Detection from Undispersed Configuration}\label{sec:un-dispersed-config}
In the undispersed configuration, initially there is at least one node with multiple robots. A robot can have one of the following three states: (i) $finder$ (ii) $helper$ (iii) $waiter$. If a robot is not alone in the initial configuration and its ID is the minimum among the co-located robots, it sets its state as $finder$. If a robot is not alone in the initial configuration and its ID is not the minimum among the co-located robots, it sets its state as $helper$. If a robot is alone in a node in the initial configuration, it sets its state as $waiter$. Each robot maintains a variable $group_{id}$. Initially the $group_{id}$ of each $finder$ robot is its own ID whereas each $helper$ robot stores the ID of the co-located $finder$ robot in its $group_{id}$. Each $waiter$ robot puts -1 in its $group_{id}$. The algorithm runs in phases. The first phase is devoted to \textit{map finding} and the second phase is for \textit{gathering}. Each robot can detect when the gathering is complete. Let us first provide a high level idea before explaining the algorithm. 

As each $finder$ robot has a company of $helper$ robots, the $finder$ robot finds a map of the graph using an existing algorithm in Phase 1. Then in Phase 2, each $finder$ robot explores the graph following the map it posses. The $helper$ and $waiter$ robots remain at their initial location until some $finder$ robot arrives and pick them up.  During the exploration, the $finder$ robots collect the $helper$ and $waiter$ robots. Though, at this point, it seems the robots may end up in multiple groups, later we describe how one particular $finder$ robot gathers all the other robots to its initial position after just one run of graph exploration. Below we present the full algorithm.

\medskip
\noindent Algorithm {\sc \textbf{Undispersed-Gathering:}}

\medskip
\noindent\textbf{Phase 1 (map finding):} The robots with state  $finder$ and $helper$ take part in this phase. The $waiter$ robots remain at their position. Each robot with state $finder$ works as an agent and the remaining co-located robots (with state $helper$) work as a movable token. Collaboratively they run the  exploration algorithm with movable token (for map finding) presented in \cite{DPP14}. It is possible for each  $finder$ to compute an isomorphic map of the underlying graph in $O(n^3)$ rounds. Since each $finder$ robot needs to store the map of the graph, the memory requirement becomes $O(m\log n)$ where $m$ is the number of edges of the graph. After constructing the map, each $finder$ meets the $helper$ robots which were working as its token during this phase, and provides the value of $n$ to the helper robots. Note that, even if multiple finder-helper(s) combination  perform this algorithm in parallel, still each finder can keep track of its helpers (and vice-versa), as the $group_{id}$ of its helpers must be equal to its own ID.

Let the number of rounds in Phase~1 be denoted as $R_1 = O(n^3)$. After Phase 1 completes, each $finder$ robot knows an isomorphic copy of the map of the graph. The $finder$ and $helper$ robots wait until $R_1$ rounds are over. Then, from round $R_1+1$, they execute Phase 2.

\medskip
\noindent\textbf{Phase 2 (gathering):} All the robots take part in this phase. Below we provide the tasks of the robots in a round with different states.
\begin{itemize}
    \item \textbf{Algorithm for $finder$ robots:} each $finder$ robot computes a spanning tree from the graph (map) it posses and does exploration along the edges of the spanning tree. Let a $finder$ robot $f$ be at some node at the start of some round. It  communicates with the co-located robots to know each other's states as well as $group_{id}$.
        \begin{itemize}
            \item If there are only $waiter$ robots present in that node, the $finder$ provides the value of $n$ and moves to another node to continue exploration according to its spanning tree unless it is back to the node from which it started its exploration. 
    
            \item If there are other $finder$ and/or $helper$ robot(s) present in that node, $f$ provides moves to another node to continue exploration according to its spanning tree unless it is back to the node from which it started its exploration if its $group_{id}$ is minimum among the co-located robot's $group_{id}$; else if there is a $finder$ robot with the minimum $group_{id}$, $f$ changes its state to $helper$, updates its $group_{id}$ to that $finder$ robot's $group_{id}$, and starts following that robot; else if there is a $helper$ robot with the minimum $group_{id}$, $f$ changes its state to $helper$, updates its $group_{id}$ to that $helper$ robot's $group_{id}$ and stays at this node.
    
            \item If there are only $helper$ robots present in that node, $f$ moves to another node to continue exploration according to its spanning tree unless it is back to the node from which it started its exploration, if there are no $helper$ robots with lesser $group_{id}$ than the $group_{id}$ of $f$; else $f$ changes its state to $helper$, updates its $group_{id}$ to the minimum $group_{id}$ among the $helper$ robot's $group_{id}$ and stays there. 
        \end{itemize}
        
    \item \textbf{Algorithm for $helper$ robots:} A $helper$ robot stays at the node if no $finder$ robot arrives in that node in the previous round. If one or more $finder$ robots arrive, if there is at least one $finder$ robot whose $group_{id}$ is less than the $helper$ robot's  $group_{id}$, the $helper$ robot changes its $group_{id}$ to the $group_{id}$ of the $finder$ robot whose $group_{id}$ is minimum among all the co-located $finder$ robots and starts following it. 
    
    \item \textbf{Algorithm for $waiter$ robots:} A $waiter$ robot stays at the node if no $finder$ robot arrives in the previous round. Else, if one or more $finder$ robots arrive, the $waiter$ starts following the minimum $group_{id}$ $finder$ robot.  It also changes its state to $helper$ and update its $group_{id}$ to the $group_{id}$ of the minimum $group_{id}$ $finder$ robot.

    \item \textbf{Termination:} Each robot keeps a counter of number of rounds since the start of the algorithm. When the counter equals $R_1+2n$, each robot terminates.
\end{itemize}

\bigskip
\noindent\textbf{Correctness:} \\
Correctness of Phase 1 follows from \cite{DPP14}. Here we study the correctness of Phase 2. Let $f$ be that $finder$ robot whose $group_{id}$, say, $l$ is the minimum among all the $finder$ robots. Let $S$ be the set of $helper$ robots with $group_{id}$ equal to $l$ at the beginning of Phase 2. It is easy to observe that $S$ is non-empty, else $f$ would not have been a $finder$ robot. Let $v$ be the node where $f$ and the $helper$ robots of $S$ belongs to at the beginning of Phase 2. We claim the following.

\begin{lemma}\label{lem:corr}
By the time the finder robot $f$ with smallest $group_{id}$ completes its graph exploration and comes back to the node $v$ from where it started Phase 2, all other robots are gathered at $v$. 
\end{lemma}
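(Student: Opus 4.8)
The plan is to fix the $finder$ $f$, its $group_{id}$ $l$ (which I will show is and stays the global minimum), the set $S$ of $group_{id}$-$l$ helpers, and the node $v$, and to establish four facts whose combination gives the statement. \textbf{(A)} $f$ never leaves the $finder$ state: a $finder$ is demoted only on meeting a robot with strictly smaller $group_{id}$, but every $group_{id}$ ever assigned is the ID of some $finder$ and hence $\ge l$ (a co-located $waiter$ is turned into a $helper$ carrying the arriving $finder$'s $group_{id}$ before any comparison is made), and a $finder$ never yields to a $waiter$; so $l$ stays the strict minimum and $f$ runs its entire spanning-tree exploration. Since that exploration is an Euler tour of a spanning tree of the $n$-node graph, it traverses each of the $n-1$ tree edges exactly twice, visits every node, and returns to $v$ after $2(n-1) < 2n$ rounds, i.e.\ no later than the termination round $R_1+2n$. \textbf{(B)} $S$ is a permanent trap at $v$: an $S$-robot, and more generally any $helper$ that ever parks at $v$ with $group_{id}=l$, moves only when a $finder$ with $group_{id}<l$ arrives, which never happens; hence every robot that ever reaches $v$ is, from that moment on, either parked at $v$ or travelling together with $f$.

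The third fact, \textbf{(C)}, is the heart of the argument, and its bookkeeping is the step I expect to be the main obstacle: \emph{no $finder$ other than $f$ ever completes its own tour}. Indeed, any such $finder$ $f'$ runs, during the window $[R_1+1,\,R_1+2n-1]$, an Euler tour of a spanning tree of the same $n$-node graph, so its route is at $v$ during some round of that window. If $f'$ is still a $finder$ at that round, it meets $S$ and the $finder$ rule turns it into a $helper$ that halts at $v$; otherwise $f'$ was already captured earlier, either following a smaller-$group_{id}$ $finder$ or stopped as a $helper$ at the home of some other $finder$ whose stay-put helpers are present there. In particular $f'$ never runs its tour to the end, hence never halts at a node other than $v$. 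I would make this rigorous by being careful that the $finder$ halting condition ``back at the node it started from'' means ``has completed its $2(n-1)$-step closed walk'' rather than merely ``currently at its start node'' (otherwise a $finder$ could stop mid-tour), and by tracking that a $finder$'s entourage may be split off by, and re-absorbed into, other $finder$s' tours without any of it being lost.

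Finally, \textbf{(D)}: no robot can be stranded at a node $u\ne v$ by the time $f$ returns to $v$. By (C) no $finder$ halts at such a $u$, so any robot stationary at $u$ is a $helper$ or a $waiter$, and such a robot can sit at $u$ only before the first $finder$ reaches $u$ or because $u$ is some $finder$'s home whose stay-put helpers are still there. But $f$'s tour reaches $u$, and $f$'s visit forces every robot then present at $u$ --- the stay-put helpers, any stopped $finder$ and its entourage, any $waiter$, any larger-$group_{id}$ $finder$ merely passing through --- to start following $f$, via the $helper$/$waiter$/$finder$ rules and using that $l$ is the global minimum. The delicate point is the timing: whenever a $finder$ $f'$ is stopped at another $finder$'s home $v''$ because $v''$'s helpers are present, those helpers are present precisely until $f$'s first visit to $v''$, so that visit occurs no earlier than $f'$'s stopping round and scoops up $f'$, its entourage, and those helpers. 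After $f$'s last visit to $u$, the only robots that ever reach $u$ again are $finder$s passing through; by (C) none of them halts, and $u$ being now free of stationary robots none of them is stopped at $u$ either, so $u$ holds no stationary robot at the round $f$ returns to $v$. Combining (A)--(D), at the moment $f$ is back at $v$ every robot is either parked at $v$ (in $S$, or trapped at $v$ as above together with its entourage) or travelling with $f$, hence every robot is at $v$ --- which is the lemma.
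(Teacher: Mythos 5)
Your proposal is correct and follows essentially the same route as the paper's proof: $f$ keeps the globally minimal $group_{id}$ and completes its spanning-tree tour, the $group_{id}$-$l$ helpers never move so $v$ permanently traps every other $finder$ (hence no other $finder$ ever finishes its tour), and $f$'s sweep collects everything else — your facts (C) and (D) simply make rigorous, via the last-visit induction, the timing/stranding issues that the paper's proof handles more loosely with its chain-following and synchronous-rounds remark. One small imprecision: a home's stay-put helpers need not remain ``precisely until $f$'s first visit'' (a $finder$ with $group_{id}$ smaller than theirs, other than $f$, may tow them away earlier), but your subsequent induction that no node other than $v$ holds or re-acquires stationary robots after $f$'s last visit does not depend on that claim, so the argument stands.
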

\proof
In the very first round of Phase 2, $f$ definitely finds its $group_{id}$ $l$ to be the minimum among the co-located robot's  and starts graph exploration. Note that the $group_{id}$ of the robots of $S$ is also $l$ but this does not restrict $f$ to consider its $group_{id}$ to be the minimum among co-located robots. It is easy to see that, according to our algorithm, $f$ never changes its state and comes back to $v$ after completing the graph traversal as its $group_{id}$ always remains the minimum among all.  

Consider any $helper$ robot $h$ with $group_{id}$ larger than $l$. The robot $h$ stays at its position till some $finder$ robot with lesser $group_{id}$ arrives. If the first $finder$ that arrives at the node where $h$ resides happens to be $f$, then $h$ reaches $v$ along with $f$ as $h$ continue to follow $f$. If some other $finder$ robot $f'$ with lesser $group_{id}$ meets $h$ first, then $h$ starts to follow $f'$. Definitely $f'$ did not explore node $v$ yet else it would have been stuck there. This continues unless $f'$ meets another finder $f''$ and becomes a $helper$. Again, it is definite that $f''$ did not visit $v$ yet. This continues till $h$ reaches at $v$. The $helper$ $h$ does not reach $v$ at some round implies that the $finder$ robot whom $h$ is currently following, has not yet finished its exploration. Since our algorithm proceeds in synchronous rounds,  by this round $f$ is also not done with its exploration then. This shows each $helper$ robot eventually reaches $v$ by the time $f$ completes exploration and returns back to $v$.

Consider any $waiter$ robot $w$. According to our algorithm, whenever $w$ meets some $finder$ robot, it changes its state to $helper$, updates $group_{id}$ and starts following the $finder$ robot.  As we have already shown that any $helper$ robot eventually reaches $v$, this shows each $waiter$ robot eventually reaches $v$ by the time $f$ completes exploration and returns back to $v$.

Consider any $finder$ robot $f'$ other than $f$. According to our algorithm, the following three cases are possible.
\begin{itemize}
    \item $f'$ reaches $v$ while exploring the graph as a $finder$ robot. In this case $f'$ changes its $group_{id}$ to $l$, changes its state to $helper$ and stays at $v$. This is because $l$ is the minimum possible $group_{id}$ among all available $group_{id}$s.
    
    \item $f'$ does not reach $v$ as a $finder$. This happens in the following two cases. 
        \begin{itemize} 
            \item During exploration $f'$ meets some $finder$ robot $f''$ with lesser $group_{id}$ and started follow $f''$ after changing its state to $helper$. As we have already shown that any $helper$ robot eventually reaches $v$, we are done with this case.
            
            \item During exploration $f'$ meets some $helper$ robot with lesser $group_{id}$, changes its state to $helper$ and stay at that node. As we have already shown that any $helper$ robot eventually reaches $v$, we are done with this case.
        \end{itemize}
\end{itemize}

It is straightforward to observe that this phase gathers robots in $2n$ rounds as the minimum ID finder robot requires exactly $2n$ rounds to explore all the nodes of the graph along some spanning tree that the $finder$ computes using the copy of the map of the graph it posses.  Also, all the $finder$ and $helper$ robots know $n$ by the end of phase 1 and each $waiter$ robot receives the value of $n$ from some $finder$ robot in phase 2 since each waiter robot meets at least one $finder$ robot in this phase. Hence it is possible for each robot to terminate once the counter equals $R_1+2n$.
\endproof

Now we have the following main theorem of this section. 
\begin{theorem}\label{thm:main-undispersed}
Given an $n$-node anonymous graph (undirected and connected) and $k$ robots are distributed over the nodes arbitrarily such at least one node holds more than one robot, then the  deterministic algorithm {\sc Undispersed-Gathering} gathers all the robots to a single node in $O(n^3)$ time using $O(m\log n)$ memory per robot, where $m$ is the number of edges in the graph. Also, each robot can detect when the gathering is completed. 
\end{theorem}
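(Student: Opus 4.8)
The plan is to assemble the theorem from three ingredients that are essentially already in place: the guarantees of the movable-token exploration algorithm of~\cite{DPP14} for Phase~1, Lemma~\ref{lem:corr} for the correctness of Phase~2, and a short argument that the common termination round is a quantity every robot can evaluate. I would then collect the time and memory bounds by straightforward addition.

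First, for Phase~1, I would invoke~\cite{DPP14}, with each $finder$ playing the role of the agent and its co-located $helper$ robots acting as the movable token. This gives every $finder$ an isomorphic copy of the graph within $R_1 = O(n^3)$ rounds, using $O(m\log n)$ bits to store the map. I would remark that distinct $finder$--$helper$ teams running concurrently do not interfere, because each $helper$'s $group_{id}$ equals its $finder$'s ID, so a team can always recognize its own members; and that all $finder$ and $helper$ robots idle until round $R_1$, so Phase~2 begins in lockstep across the whole system, while the $waiter$ robots have simply remained in place.

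Second, for Phase~2, Lemma~\ref{lem:corr} already does the work: the $finder$ $f$ with minimum $group_{id}$ $l$ traverses a spanning tree of the graph (computed from its map) in exactly $2n$ rounds, never changes state, returns to its Phase~2 starting node $v$, and by that round every other robot---$helper$, $waiter$, or demoted $finder$---has been absorbed into $f$'s group and sits at $v$. Together with Phase~1 this yields a total running time of $R_1 + 2n = O(n^3)$. For memory, the dominant term is the $O(m\log n)$ bits holding the map; the round counter, $group_{id}$, stored spanning tree, and state each cost $O(m\log n)$ or less (IDs lie in $[1,n^b]$, hence $O(\log n)$ bits), so $O(m\log n)$ per robot suffices.

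Finally, for detection, each robot keeps a round counter from the start and halts when it reads $R_1 + 2n$. The point to check is that every robot can evaluate this threshold: $R_1$ and $2n$ are fixed functions of $n$; the $finder$ and $helper$ robots know $n$ from the outset (helpers receive it at the end of Phase~1), and every $waiter$ is met by some $finder$ during Phase~2---indeed $f$ alone visits every node of its spanning tree, so every node, unless the $waiter$ was picked up earlier by another $finder$---and is handed $n$ at that meeting, after which it can compute the threshold and, by Lemma~\ref{lem:corr}, reaches $v$ before the threshold is reached. Hence at round $R_1 + 2n$ all robots are at $v$ and terminate, so termination is correct. I expect the only delicate part to be this last step---verifying that every $waiter$ both learns $n$ and arrives at $v$ strictly before the common termination round---rather than the time and memory accounting, which is immediate once Phases~1 and~2 are combined.
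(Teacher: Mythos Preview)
Your proposal is correct and follows essentially the same route as the paper: Phase~1 correctness and the $O(n^3)$ bound are delegated to~\cite{DPP14}, Phase~2 correctness and the $2n$-round bound come from Lemma~\ref{lem:corr}, and the memory is dominated by the $O(m\log n)$ map. The detection argument you spell out (every robot eventually knows $n$ and can compute $R_1+2n$) is in fact already contained in the proof of Lemma~\ref{lem:corr} rather than in the theorem's proof, so your write-up is slightly more explicit there than the paper's, but the content is the same.
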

\proof
The correctness of Phase 1 of our algorithm follows from \cite{DPP14}. The correctness of Phase 2 of our algorithm that gathers all the robots is already provided in Lemma \ref{lem:corr}. The time complexity of Phase 1 is $O(n^3)$ and follows from \cite{DPP14}. As any robot except the minimum ID robot can be a $finder$ robot, the memory requirement due to storing a map of the graph is $O(m \log n)$ bits per robot. The time complexity of Phase 2 depends on the time required by the $finder$ robot with minimum $group_{id}$ to complete the graph exploration and return to its initial position at the beginning of Phase 2. As the $finder$ explores according to a spanning tree of the graph, it needs exactly $2n$ rounds. Hence the overall time complexity becomes $O(n^3)$.

\endproof

Let the running time of {\sc Undispersed-Gathering} be upper bounded by $R$, where $R = R_1+ 2n \in O(n^3)$. We use this later in presenting the main algorithm which works for any initial distribution. 

\subsection{Gathering from Dispersed Configuration}\label{sec:dispersed-config}
In this section, we first present an algorithm for gathering the robots from the dispersed configuration. In particular, given the $k \leq n$ robots initially positioned at a dispersed configuration, we present an efficient approach to reach an undispersed configuration. Then applying the above algorithm in Section~\ref{sec:un-dispersed-config}, it solves the gathering problem. To make it lucid, let us first consider the following simpler case. 

Suppose, in the dispersed configuration, two robots are positioned in two neighboring nodes (i.e., $1$-hop away). We show that it is possible to convert this configuration to an undispersed configuration in $O(n \log n)$ time. For this, we use the ID bit-string of the robots to assemble them to a single node, and hence reaching to an undispersed configuration. Let the two robots $r_u$ and $r_v$ be positioned on two neighboring nodes $u$ and $v$ respectively. Since the IDs of the robots are distinct, the ID bit-string of $r_u$ and $r_v$ are different (and possibly different lengths). The robots $r_u$ and $r_v$ run the following procedure to assemble.     

\medskip
\noindent {\sc \textbf{$1$-Hop-Meeting:}} The approach runs in cycles and each cycle consists of $2(n-1)$ rounds. A robot performs the following by looking at the bits of its ID from `right' to `left' (i.e., in the reversed order). If the bit is $0$, the robot doesn't move and stays at its node for the $2(n-1)$ rounds. If the bit is $1$, the robot visits all the neighbors one by one following the port numbers starting from $1$. Since visiting a neighbor takes $2$ rounds (go to the neighbor and come back), the robot must complete visiting all the neighbors in $2(n-1)$ rounds.\footnote{If  the maximum degree of the graph $\Delta$ is known to the robots, then visiting neighbors may take $2\Delta$ rounds. However, assuming $\Delta = n-1$ will not affect the asymptotic bound in our main result for some cases.} If the degree of the node is less than $n-1$, the robot waits at its node for the remaining rounds of the cycle. Further, since the ID lengths are  different, a robot may finish scanning all its ID bits earlier than the others. Once a robot finishes scanning its ID bits, it waits for the procedure to be ended--- which is $a n \log n$ rounds, for some large constant $a$\footnote{Technically, the value of $a$ should be larger than $b$ in the ID range $[1, n^b]$. If such a value of $a$ is not known, $a$ can be taken as $\log\log n$. In this case, the $\log n$ factor in all the running time bounds below will be replaced by $(\log n \log\log n)$. For simplicity, we assume $a$ to be a large constant.}. The robots wait in both the cases to synchronize the start/end time of the cycles. Therefore, it takes one cycle time for one bit. Since we assume the IDs are of length $O(\log n)$ bits, the procedure stops in $O(n\log n)$ rounds.

The above procedure guarantees the meeting of two neighboring robots. Since the ID bit-strings of the robots are different, there is at least one index in the string where their bits are different, i.e., one is $0$ and another is $1$. For the first such an index, one of them doesn't move and the other visits all the neighbors. Thus, they meet and assemble there. Notice that the procedure doesn't guarantee assembling any pair of neighboring robots, but ensures assembling at least one pair and hence reaching to an undispersed configuration. Therefore, we get the following lemma. 

\begin{lemma}\label{lem:meeting-two-neighboring-robots}
Suppose two robots are located on the two neighboring nodes in a dispersed configuration. Then it takes $O(n\log n)$ rounds to reach an undispersed configuration.  
\end{lemma}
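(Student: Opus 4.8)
The plan is to analyze the \textsc{1-Hop-Meeting} procedure as run by the two neighboring robots $r_u$ (on node $u$) and $r_v$ (on node $v$), and show that they become co-located within $O(n\log n)$ rounds; since the node on which they meet then holds two robots, this is by definition an undispersed configuration. I would first fix notation: write the ID bit-strings of $r_u$ and $r_v$ as binary integers with no leading zeros, so the most significant bit of each is $1$. The IDs are distinct, so reading both strings from the least significant bit toward the most significant bit, there is a first position $j$ (counting from the right, starting at $1$) at which the two robots act differently. Either position $j$ lies within the length of both IDs and the two bits differ (one $0$, one $1$); or one ID has length exactly $j-1$ while the other has length $\ge j$, in which case at step $j$ the shorter-ID robot has already exhausted its bits and is waiting (equivalently, stationary), while the longer-ID robot reads a genuine bit there (and if $j$ equals the length of that longer ID, the bit is $1$ since an ID's MSB is $1$). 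In every case, at step $j$ exactly one of the two robots stays put for the whole cycle while the other performs the neighbor-scan.

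Next I would establish synchronization. Because both robots start the procedure in the same round, spend exactly one cycle of $2(n-1)$ rounds per bit, and wait out a full cycle whenever the current bit is $0$ or whenever they have run out of bits, both robots enter and leave the $j$-th cycle during exactly the same block of $2(n-1)$ rounds. During that block the stationary robot, say $r_v$, sits on $v$ for all $2(n-1)$ rounds. The moving robot $r_u$ visits each of its neighbors in turn via a $2$-round out-and-back; since $v$ is a neighbor of $u$ and $\deg(u)\le n-1$, $r_u$ has enough time within $2(n-1)$ rounds to reach $v$, so in some round of this cycle $r_u$ is located on $v$. In that round $r_u$ and $r_v$ are co-located, so the configuration is undispersed. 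Note this argument only claims that \emph{some} pair assembles; it does not assert anything about other pairs of neighboring robots, which is all the lemma requires.

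For the running time bound, $j$ is at most the maximum ID bit-length, which is $O(\log n)$ because IDs lie in $[1,n^b]$ for a constant $b$. The procedure is scheduled to run for $a\,n\log n$ rounds for a suitably large constant $a$, and $a\,n\log n > 2(n-1)\cdot j$, so the meeting at cycle $j$ occurs before the procedure terminates. Hence an undispersed configuration is reached within $O(n\log n)$ rounds.

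The step I expect to need the most care is the case analysis around robots with ID strings of different lengths together with the ``wait once all bits are scanned'' rule: one must verify that an early finisher's behavior (staying put) is consistent with the claim that at step $j$ exactly one robot moves and the other is stationary, and that the cycle boundaries of the two robots remain aligned despite one of them finishing earlier. Once that is pinned down, the neighbor-scan and the within-cycle co-location argument are routine.
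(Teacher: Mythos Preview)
Your argument is correct and follows the same approach as the paper's: identify the first cycle (reading IDs from LSB) at which the two robots behave differently, observe that the cycles are aligned because both robots start simultaneously and each cycle lasts exactly $2(n-1)$ rounds, and conclude that the mover visits the stayer's node during that cycle. Your write-up is in fact more careful than the paper's brief sketch, since you explicitly treat the unequal-length case and the post-exhaustion waiting rule; the only minor imprecision is the phrase ``length exactly $j-1$'' for the shorter ID, which should be ``length at most $j-1$'' (the longer robot may read several $0$ bits after the shorter one finishes before the first differing cycle), but this does not affect the conclusion.
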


The {\sc $1$-Hop-Meeting} procedure can be extended to the meeting of two robots which are at $i$-hop distance away from each other. Let us call the procedure as {\sc $i$-Hop-Meeting}. 

\medskip
\noindent {\sc \textbf{$i$-Hop-Meeting:}} In the same way, the approach runs in cycles where each cycle consists of $T(i) = \sum_{j=1}^i 2(n-1)^j$ rounds. A robot performs the following by looking at the bits of its ID from `right' to `left' (i.e., in the reversed order). If the bit is $0$, the robot doesn't move and stays at its node for the $T(i)$ rounds. If the bit is $1$, the robot visits all the nodes lie within the $i$-hop distance from it using DFS traversal following the port numbers. In the same way, if the degree of some nodes is less than $n-1$ the robot waits at its node for the remaining rounds of the cycle. A robot also waits for $a n \log n$ rounds at its node if finishes scanning all its ID bits, where $a$ is some large constant.  

 The procedure takes $(\sum_{j=1}^i 2(n-1)^j\log n)$ rounds, which is $O(n^i\log n)$ rounds. Thus, we get the following general result. 

\begin{lemma}\label{lem:meeting-i-hop-robots}
Suppose, in the dispersed configuration, two robots are positioned on the two nodes which are at a $i$-hop distance away from each other. Then it takes $T(i) = \sum_{j=1}^i 2(n-1)^j\log n =O(n^i\log n)$ rounds to reach an undispersed configuration, where $i\leq D$, the diameter of the graph.  
\end{lemma}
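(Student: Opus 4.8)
The plan is to mirror the argument behind Lemma~\ref{lem:meeting-two-neighboring-robots}, replacing the ``visit every neighbour'' step of {\sc $1$-Hop-Meeting} by a depth-bounded DFS that sweeps the whole $i$-ball around a robot's start node. First I would check that {\sc $i$-Hop-Meeting} is well defined and that one cycle fits in $T(i)=\sum_{j=1}^{i}2(n-1)^j$ rounds. In a cycle where its current bit is $1$, a robot runs a DFS from its current node that follows port numbers but never descends past depth $i$; since it does not know $\Delta$, it treats every node as having up to $n-1$ ports. From each node at depth strictly less than $i$ the robot probes all (at most $n-1$) ports, each probe costing at most two traversals (out, and back if the probe is abandoned), so the number of edge traversals is at most $2(n-1)\sum_{j=0}^{i-1}(n-1)^j=\sum_{j=1}^{i}2(n-1)^j=T(i)$; the DFS backtracks to the start node on completion, and any leftover rounds (or rounds saved because a node had degree $<n-1$) are spent idling, so the cycle lasts exactly $T(i)$ rounds. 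The same bookkeeping shows that every node within hop-distance $i$ of the start node is visited during such a cycle (each is the endpoint of a walk of length $\le i$, hence is reached), and that the procedure needs only $\mathrm{poly}(n)$ memory (an $O(i\log n)$-bit DFS stack plus counters of magnitude $\mathrm{poly}(n)$).

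Next I would establish the invariant that \emph{at every cycle boundary each robot sits on the node it occupied when the procedure began}: a robot with bit $0$, or one that has exhausted its ID bits, never moves, and a robot with bit $1$ returns to its start after its DFS; the claim follows by induction on the number of completed cycles. Now encode each robot's behaviour as the infinite $0/1$ sequence consisting of its ID bits read least-significant-first, followed by infinitely many $0$s (``ID bits exhausted'' and ``bit $0$'' both mean ``stay put for the whole cycle''). Since the two IDs are distinct positive integers, these two sequences differ somewhere; let $j^{\star}$ be the first such position, so in cycle $j^{\star}$ exactly one of the two robots moves. As long as neither of the two given robots has yet become co-located with another robot, the invariant applies to both, so if no meeting occurs in cycles $1,\dots,j^{\star}-1$ then at the start of cycle $j^{\star}$ the two robots are at hop-distance exactly $i$; in that cycle the stationary one stays on its start node throughout while the other performs a complete depth-$i$ DFS from its start node and hence visits the stationary robot's node, so in some round of cycle $j^{\star}$ the two robots are co-located. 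Upon becoming co-located the robots halt the procedure (taking on the roles of the next phase), so from that round on there is a node carrying at least two robots, i.e.\ the configuration is undispersed and stays so. If instead some pair of robots became co-located in an earlier cycle (possibly involving a third robot), the configuration became undispersed even sooner, which only strengthens the bound; either way an undispersed configuration is reached by the end of cycle $j^{\star}$.

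Finally, for the running time: the longest ID bit-string has length at most $b\log n=O(\log n)$ since all IDs lie in $[1,n^{b}]$, so the procedure runs for $O(\log n)$ cycles (choosing the constant $a$ large enough, as in the footnote, ensures every robot finishes scanning its bits before the procedure ends), and in particular $j^{\star}=O(\log n)$. Each cycle lasts $T(i)=\sum_{j=1}^{i}2(n-1)^j$ rounds, so an undispersed configuration is reached within $\sum_{j=1}^{i}2(n-1)^j\cdot O(\log n)=O(n^{i}\log n)$ rounds, which is the claimed bound; note that $i$, being a hop-distance between two existing nodes, automatically satisfies $i\le D$. I expect the only delicate point to be the first paragraph---verifying that the depth-$i$ DFS really does visit the entire $i$-ball and return home within $T(i)$ rounds on a port-labelled anonymous graph without knowledge of $\Delta$, and with polynomial memory---but this is essentially the counting already done for {\sc $1$-Hop-Meeting}, now carried out for a ball of radius $i$ instead of radius $1$.
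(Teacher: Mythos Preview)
Your proposal is correct and follows essentially the same approach as the paper: the paper's ``proof'' is really just the description of {\sc $i$-Hop-Meeting} together with the informal symmetry-breaking argument already given for Lemma~\ref{lem:meeting-two-neighboring-robots}, and you have carefully spelled out the details (the $T(i)$ bound for the depth-$i$ DFS, the return-to-start invariant at cycle boundaries, and the first differing bit forcing a meeting). If anything, your write-up is more thorough than the paper's, which states the running time and the lemma without further justification.
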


Therefore, it follows from the above lemma that when $i > 5$, the procedure {\sc $i$-Hop-Meeting} itself takes a longer time than the existing gathering result--- $\tilde{O} (n^5 \log n)$ rounds. For $i\leq 5$, our algorithm outperforms the best existing algorithms.

\medskip
Therefore, the complete algorithm, for any number of robots and for any initial distribution of the robots, works as follows. 

\medskip
\noindent Algorithm {\sc \textbf{Faster-Gathering:}}

\medskip
\noindent \textbf{Step~1:} Every robots run the {\sc Undispersed-Gathering} algorithm. If, indeed, the initial distribution of the robots is undispersed configuration, then gathering is successful and the robots terminate if it is not alone (see the Lemma~\ref{lem:main-termination}). Otherwise, there won't be any movement of the robots and all the robots stay at their original position (node) for $R$ rounds. 

\medskip
\noindent (\textbf{Step~2} to \textbf{Step~6}): For $i = 2, \dots, 6$, run the following steps.

\noindent \textbf{Step~i:} After $(i-1)R$ rounds, if gathering is not achieved (i.e., the initial distribution is dispersed), every robots first run the {\sc $(i-1)$-Hop-Meeting} procedure and then run the {\sc Undispersed-Gathering} algorithm. If  at least two robots located at neighboring nodes, then {\sc $1$-Hop-Meeting} converts the initial distribution to an undispersed configuration. Then by running {\sc Undispersed-Gathering} solves the gathering problem.  Each robot terminates if it is not alone (see Lemma~\ref{lem:main-termination}). Otherwise, there won't be any movement of the robots and all the robots stay at their position for $R$ rounds. 

\medskip
\noindent \textbf{Step~7:} After $6R$ rounds, if gathering is not yet achieved, run the gathering algorithm using UXS presented in Section~\ref{sec:gathering-via-uxs}. Since this algorithm works for any number of robots and for arbitrary initial distribution of the robots, gathering must be achieved with the detection.

\medskip
We now show that every robots detect that gathering is achieved in the end of every steps of the algorithm.  

\begin{lemma}\label{lem:main-termination}
In the end of any of the first 6 steps of the {\sc Faster-Gathering} algorithm, if a robot is alone at some node, then each robot is alone.
\end{lemma}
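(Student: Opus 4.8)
The plan is to prove, for each step $j \in \{1,\dots,6\}$, the following dichotomy about the configuration at the end of step $j$: either every robot is alone on its node, or all robots sit together on a single node (so that, again, no robot is alone). The lemma follows at once, since in the second case the hypothesis ``a robot is alone'' is false. I may assume $k \ge 2$ (the case $k=1$ being trivial) and that the algorithm is still running at the end of step $j$; if it has already terminated, that is because gathering was detected at the end of some earlier step, so all robots are co-located and the implication holds vacuously.

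First I would reduce to a clean starting point. If the original configuration is undispersed, then by Theorem~\ref{thm:main-undispersed} Step~1 gathers all robots onto one node, so the dichotomy holds at the end of Step~1 and for $j\ge 2$ the algorithm has already terminated; hence assume the original configuration is dispersed. Next, if the algorithm reaches step $j$ then no two robots were ever co-located during steps $1,\dots,j-1$: otherwise, in the first step in which a co-location occurred, it must have occurred during an {\sc $i$-Hop-Meeting} subroutine (since {\sc Undispersed-Gathering} on a dispersed configuration creates only $waiter$s and never moves anyone), by the design of that subroutine the colliding robots stay together through its end, so the ensuing {\sc Undispersed-Gathering} is invoked from an undispersed configuration and, by Theorem~\ref{thm:main-undispersed}, gathers every robot onto one node, whence detection fires at the end of that step and the algorithm terminates before step $j$. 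Since no co-location occurred, and since every {\sc $i$-Hop-Meeting} procedure is a port-number-guided traversal interleaved with stationary waiting that returns each robot to its home node by the end, while {\sc Undispersed-Gathering} on a dispersed configuration induces no movement, the configuration at the start of step $j$ equals the original dispersed configuration.

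It then remains to analyze step $j$ from this dispersed configuration. For $j=1$, {\sc Undispersed-Gathering} on a dispersed configuration moves no robot, so at the end of Step~1 every robot is alone. For $j\in\{2,\dots,6\}$, step $j$ first runs {\sc $(j-1)$-Hop-Meeting}. If no two robots ever become co-located during that procedure, each robot ends it at its home node, the configuration is again the original dispersed one, the subsequent {\sc Undispersed-Gathering} induces no movement, and at the end of step $j$ every robot is alone. If some two robots do become co-located during {\sc $(j-1)$-Hop-Meeting}, then by the design of that procedure they assemble and remain together through its end --- exactly what Lemmas~\ref{lem:meeting-two-neighboring-robots} and~\ref{lem:meeting-i-hop-robots} term reaching an undispersed configuration --- so {\sc Undispersed-Gathering} is invoked from an undispersed configuration and, by Theorem~\ref{thm:main-undispersed}, gathers all robots onto one node; at the end of step $j$ no robot is alone. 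In every case the dichotomy holds.

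The step I expect to be the main obstacle is justifying that no co-location can occur in an earlier step without triggering termination and that all robots take the proceed/terminate decision simultaneously. This rests on two points that must be stated with care: that the detection rule ``terminate at the end of a step if not alone'' is taken simultaneously by all robots (which is itself the dichotomy applied to earlier steps, so the cleanest organization is an induction on $j$), and that the ``assemble and remain co-located'' semantics of {\sc $i$-Hop-Meeting} guarantees an undispersed configuration at the exact moment {\sc Undispersed-Gathering} is invoked, so a single collision already forces full gathering within that step. Both are bookkeeping on top of Theorem~\ref{thm:main-undispersed} and the subroutine definitions, but they should be spelled out explicitly.
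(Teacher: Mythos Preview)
Your argument is essentially correct, but it is considerably more elaborate than the paper's, and in the process you introduce an assumption you do not actually need.

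The paper's proof is a one-line case split at the moment \textsc{Undispersed-Gathering} is invoked inside step~$j$: either the configuration at that instant is dispersed, in which case every robot is a $waiter$, nobody moves, and every robot remains alone; or the configuration is undispersed, in which case Theorem~\ref{thm:main-undispersed} guarantees all robots end the step co-located. That is the whole proof. No induction on $j$, no tracking of the original configuration, and no need to reason about what happened during \textsc{$i$-Hop-Meeting}.

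By contrast, your route commits you to the claim that ``colliding robots stay together through the end'' of \textsc{$i$-Hop-Meeting}. You correctly flag this as the main obstacle, but note that the subroutine description does not literally say that a robot halts its DFS upon meeting another; two robots both reading a $1$ bit could in principle cross paths transiently and separate. The paper's simpler argument is immune to this issue because it never asks what happened \emph{during} \textsc{$i$-Hop-Meeting}, only what the configuration is when \textsc{Undispersed-Gathering} starts. Your concern about simultaneous termination is likewise superfluous here: the lemma is purely about the configuration at the end of a step, and the simultaneity of the ``terminate if not alone'' decision follows \emph{from} the lemma rather than being needed \emph{for} it. So your proof works, but the paper's decomposition is both shorter and avoids the fragile ``assemble'' semantics you had to assume.
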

\proof
Each of the first 6 steps call {\sc Undispersed-Gathering} algorithm at the end. There can be the following two cases. 
\begin{itemize}
\item At the time when {\sc Undispersed-Gathering} algorithm starts in any particular phase and the robot configuration is dispersed. In this case, all the robots are $waiter$ robots and do nothing. Accordingly at the end of this step, each robot is alone as the dispersed configuration never changes.
\item At the time when {\sc Undispersed-Gathering} algorithm starts in any particular phase and the robot configuration is undispersed. Then by Theorem \ref{thm:main-undispersed}, no robot will be alone at the end of the step. \endproof
\end{itemize}

Thus, it follows from the above lemma that a robot can detect gathering via alone or not. Finally, the Step~7 also guarantees gathering with detection.   

The {\sc Faster-Gathering} algorithm correctly gathers the robots at a single node, detect the completion of the gathering and terminates. Each step can be synchronized easily using the time bound of {\sc Undispersed-Gathering} and {\sc $i$-Hop-Meeting}. Therefore, the time complexity of the algorithm can be written as $O(\min\{R + T(i), \tilde{O} (n^5)\}$ rounds for $i = 0, 1, 2, \ldots, 5$, where $T(0) = 0$ indicates an undispersed distribution, $T(i) =O(n^i\log n)$, for $i= 1, 2, \ldots, 5$, see Lemma~\ref{lem:meeting-i-hop-robots} and $\tilde{O} (n^5)$ is the time complexity of the gathering algorithm using UXS, see Theorem~\ref{thm:main-via-uxs}. In the following, we present the main result. 

\begin{theorem}[Faster Gathering]\label{thm:main}
Given an $n$-node, $m$-edge anonymous graph (undirected and connected) and $k$ robots are distributed over the nodes arbitrarily, then there is a deterministic algorithm which gathers all the robots to a single node with detection in time:
\begin{enumerate}[(i)]
    \item $O(n^3)$ rounds, if the initial distribution of the robots is either {\em undispersed}, or {\em dispersed} with at least two robots positioned at a distance $2$ from each other.
    \item $O(n^i \log n)$ rounds, if the initial distribution of the robots is {\em dispersed} with at least two robots positioned at a distance $i$ from each other, for $i= 3, 4$ and $5$.
    \item Otherwise, $\tilde{O}(n^5)$ rounds.   
\end{enumerate}

The algorithm requires that robots know only the value of $n$. Each robot requires $O(M + m\log n)$ bits of memory, where $M$ is the memory required to implement the UXS.  
\end{theorem}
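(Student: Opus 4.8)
The plan is to prove the three claims packaged in the statement — that {\sc Faster-Gathering} gathers all robots on one node, that every robot detects this and halts, and that the stated time and memory bounds hold — by reducing each to a guarantee already established for one of the three sub-procedures the algorithm runs: {\sc Undispersed-Gathering} (Theorem~\ref{thm:main-undispersed}), {\sc $i$-Hop-Meeting} for $i\le 5$ (Lemma~\ref{lem:meeting-i-hop-robots}), and the UXS-based algorithm of Section~\ref{sec:gathering-via-uxs} (Theorem~\ref{thm:main-via-uxs}). First I would record the bookkeeping fact that makes the step structure work: each sub-procedure runs for a number of rounds that is a predetermined function of $n$ alone — $R=R_1+2n$ for {\sc Undispersed-Gathering}, $T(i)=O(n^{i}\log n)$ for {\sc $i$-Hop-Meeting}, and for Step~7 no synchronization is needed at all since the UXS algorithm carries its own termination. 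Hence all robots, knowing only $n$, stay in lockstep and agree on the rounds at which Steps~1--7 begin and end, which is precisely what lets a robot run the ``am I alone?'' test at a step boundary; when {\sc Undispersed-Gathering} is invoked as a sub-step I would suppress its internal termination and replace it by that test.

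For correctness and detection the engine is Lemma~\ref{lem:main-termination} together with its proof: at the end of each of the first six steps the configuration is all-or-nothing, being either one where every robot is alone (the configuration was dispersed when that step's {\sc Undispersed-Gathering} call began, so all robots were $waiter$s and nothing moved) or one where every robot sits on a single node (it was undispersed, so Theorem~\ref{thm:main-undispersed} applies). Thus a robot that is not alone at a step boundary correctly declares gathering complete and halts, a robot that is still alone correctly advances, and Step~7 achieves gathering with detection unconditionally by Theorem~\ref{thm:main-via-uxs}; it remains to exhibit a successful step. If the initial configuration is undispersed, Step~1 succeeds. If it is dispersed with the closest pair at hop-distance $i\in\{1,\dots,5\}$, then Steps~$2,\dots,i$ run {\sc $1$-Hop-Meeting} through {\sc $(i-1)$-Hop-Meeting}, all of radius $<i$: since no robot ever comes within $i-1$ hops of another, no merge happens, and because every robot returns to its start node at the end of each cycle, the configuration is left exactly unchanged, so those steps' {\sc Undispersed-Gathering} calls (facing a dispersed configuration) also do nothing; then Step~$i+1$ runs {\sc $i$-Hop-Meeting}, which by Lemma~\ref{lem:meeting-i-hop-robots} — at the first bit position where the two closest robots' ID strings differ, one of them waits while the other sweeps its entire $i$-hop ball and finds it — produces an undispersed configuration, and the {\sc Undispersed-Gathering} call of that step then gathers everyone (Theorem~\ref{thm:main-undispersed}). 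Finally, if no two robots lie within $5$ hops, Steps~1--6 are all inert and Step~7 finishes the job.

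For the running time I would charge each executed step to the appropriate result and sum. Step~1 and the {\sc Undispersed-Gathering} call inside every later step cost $O(n^3)$ each (Theorem~\ref{thm:main-undispersed}); Step~$j+1$ additionally pays $T(j)=O(n^{j}\log n)$ for {\sc $j$-Hop-Meeting} (Lemma~\ref{lem:meeting-i-hop-robots}, with $T(0)=0$); and Step~7 costs $\tilde{O}(n^5)$ (Theorem~\ref{thm:main-via-uxs}, using that every ID is at most $n^{b}$, so $\log L=O(\log n)$). An undispersed start then costs $O(n^3)$; a dispersed start whose closest pair is at distance $i$ runs Steps~$1,\dots,i+1$ for $\sum_{j=0}^{i}(O(n^3)+T(j))=O(n^3+n^{i}\log n)$, which is $O(n^3)$ when $i\le 2$ and $O(n^{i}\log n)$ when $i\in\{3,4,5\}$ (and $n^5\log n=\tilde{O}(n^5)$); and any remaining instance runs all seven steps for $6\cdot O(n^3)+\sum_{j=1}^{5}T(j)+\tilde{O}(n^5)=\tilde{O}(n^5)$ — these three totals are exactly cases (i), (ii), (iii). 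For memory, a robot stores an isomorphic map during {\sc Undispersed-Gathering} ($O(m\log n)$ bits, Theorem~\ref{thm:main-undispersed}) and the UXS state during Step~7 ($O(M+\log n)$ bits, Theorem~\ref{thm:main-via-uxs}), and uses only $O(\log n)$ bits for everything else (the current step index, $group_{id}$, a round counter, and a constant-depth DFS stack for the Hop-Meeting procedures), for a total of $O(M+m\log n)$ bits; no quantity other than $n$ is ever used.

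The one genuinely non-routine point is this synchronized, radius-by-radius case analysis. The two things to take care over are: checking that invoking a Hop-Meeting whose radius is smaller than the true closest-pair distance leaves the configuration \emph{exactly} as it was, so the next step faces the original instance rather than a corrupted one; and confirming that a robot knowing only $n$ can compute the precise length of every step, so that the alone/not-alone decision is taken by all robots at the same round and interpreted identically by all. Everything else is a direct assembly of results already in hand.
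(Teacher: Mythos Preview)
Your argument follows the paper's own route—assemble the guarantees of {\sc Undispersed-Gathering}, {\sc $i$-Hop-Meeting}, and the UXS procedure, use the fact that every sub-procedure's length is a function of $n$ alone for synchronization, and invoke Lemma~\ref{lem:main-termination} for the step-boundary detection test—and you are considerably more explicit than the paper, whose ``proof'' is just the paragraph preceding the theorem summarizing the time as $O(\min\{R+T(i),\tilde O(n^5)\})$.

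One point to tighten: your claim that in Steps~$2,\dots,i$ ``no merge happens'' because ``no robot ever comes within $i-1$ hops of another'' is not fully justified. During $j$-{\sc Hop-Meeting} with $j<i$, two robots whose current ID bits are both $1$ each wander up to $j$ hops from their starting nodes, so if those starts are at distance at most $2j$ (which is possible once $j\ge \lceil i/2\rceil$) they can land on the same node mid-DFS. This does not damage your conclusion, however: if such an early meeting produces an undispersed configuration, that step's call to {\sc Undispersed-Gathering} already succeeds, so gathering is achieved \emph{no later than} Step~$i{+}1$ and the stated upper bounds still hold. The paper does not address this either; the clean fix is simply to phrase the case analysis as ``gathering is achieved by Step~$i{+}1$'' rather than ``at Step~$i{+}1$''.
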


\begin{remark}\label{rem:given-hop-info}
We note that if the hop distance information in the initial configuration is given, then the {\sc Faster-Gathering} algorithm finishes faster by directly running the particular step of the algorithm. 
\end{remark}

\begin{remark}\label{rem:onDelta}
Let us further remark that if $\Delta$, the maximum degree of the graph, is known to the robots, the time complexity of {\sc Faster-Gathering} becomes $O(\min\{R + \Delta^i\log n, \, \tilde{O} (n^5)\})$ rounds for $i \leq 5$. It follows from the procedure {\sc $i$-Hop-Meeting}, in which, each cycle consists of  $T(i) = \sum_{j=1}^i 2\Delta^j$ rounds to visit all the neighbors of a node. 
\end{remark}
Now, we show a crucial result on the fact that when there are many robots, one cannot place all of them far from each other in the dispersed configuration. In other words, if there are sufficiently many robots, at least two of them must be positioned nearby (say, at most $5$-hop away from each other). It follows from a more general result stated below.

\begin{lemma}\label{lem:hop-distance-bound}
Suppose $\left\lfloor\frac{n}{c}\right\rfloor+1$ robots are distributed arbitrarily over the nodes on a $n$-node graph, for any constant $c$. Then there exists at least two robots which are at most $2c-2$ hop distance away from each other.
\end{lemma}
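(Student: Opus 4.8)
The plan is to argue by contradiction using a ball‑packing (volume) argument. Suppose, for contradiction, that every pair of robots is at hop distance at least $2c-1$ from one another. For each robot $r_i$, sitting on node $v_i$, consider the ball $B_i = \{u \in V : d(v_i,u) \le c-1\}$ of radius $c-1$ centered at $v_i$, where $d(\cdot,\cdot)$ denotes hop distance. There are $\lfloor n/c\rfloor + 1$ such balls, one per robot (if several robots share a node, the statement is immediate since they are at distance $0 \le 2c-2$, so we may assume the $v_i$ are distinct, though it is not even needed for the argument).

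The first key step is to show the balls $B_i$ are pairwise disjoint. If some node $u$ belonged to both $B_i$ and $B_j$ with $i \ne j$, then by the triangle inequality for the shortest‑path metric, $d(v_i,v_j) \le d(v_i,u) + d(u,v_j) \le (c-1)+(c-1) = 2c-2 < 2c-1$, contradicting our assumption. The second key step is a lower bound $|B_i| \ge c$ for each $i$ (here we are in the meaningful regime $n \ge c$, since otherwise $\lfloor n/c\rfloor + 1 = 1$ and there is only one robot, making the claim vacuous). This is where connectivity of the graph is used: running a BFS from $v_i$, each layer $L_0 = \{v_i\}, L_1, L_2, \dots$ is nonempty until all $n$ nodes are exhausted, so the layers $L_0,\dots,L_{c-1}$ are all nonempty (unless every node is already within distance $c-1$, in which case $|B_i| = n \ge c$); in either case $|B_i| \ge \min(c,n) = c$.

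Finally I would combine the two facts with a counting inequality: since the $B_i$ are disjoint subsets of $V$ and there are $\lfloor n/c\rfloor + 1$ of them, $n = |V| \ge \bigl|\bigcup_i B_i\bigr| = \sum_i |B_i| \ge c\bigl(\lfloor n/c\rfloor + 1\bigr) > c\cdot \tfrac{n}{c} = n$, using $\lfloor n/c\rfloor + 1 > n/c$. This contradiction shows that some two robots must lie within hop distance $2c-2$, proving the lemma. The step I expect to require the most care is the lower bound $|B_i| \ge c$: it is the only place that genuinely invokes connectivity, and one must phrase it so that the degenerate case $n < c$ (only one robot, claim vacuous) is cleanly excluded rather than silently assumed; everything else is routine arithmetic with the floor function.
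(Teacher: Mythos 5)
Your proof is correct and takes essentially the same packing/counting approach as the paper: assume all pairwise robot distances are at least $2c-1$, assign to each robot at least $c$ nodes that cannot be shared with any other robot, and derive the contradiction $c\left(\lfloor n/c\rfloor+1\right) > n$. The only cosmetic difference is that you package the count as disjoint balls of radius $c-1$ (disjointness via the triangle inequality, size $\ge c$ via BFS layers and connectivity), whereas the paper counts each robot's node plus $c-1$ ``free'' nodes on a shortest path toward another robot; the arithmetic and the final contradiction are identical.
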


\proof
The number of robots is $\lfloor\frac{n}{c}\rfloor+1\geq 2$, since $c\leq n$. By contradiction, assume that any pair of robots have distance at least $2c-1$. We show a contradiction. Consider a robot, say, $r$, located at some node $v$ and let $v'$ be the node that contains another robot. Since the graph is connected, $v$ must be connected to $v'$. Let $v_1$, $v_2$, $\cdots$, $v_{c-1}$ be the first $c-1$ nodes on a shortest path from $v$ to $v'$. So the distances of $v_1$, $v_2$, $\ldots$, $v_{c-1}$ from $v$ are $1, 2, \ldots, c-1$ respectively. Then by the assumption, no robot $r'$ can be positioned at a distance less than $c$ from any of the nodes $v_1$, $v_2$, $\ldots$, $v_{c-1}$; otherwise, the distance between $r$ and $r'$ becomes less than $2c-1$. This implies that for any robot $r$, there must exists a path of at least $c-1$ nodes which are free, i.e., not holding a robot (since the graph is connected). 
It is easy to observe that, $v_i$ can not lie on a free path of any other robot, say $r'$. If it is the case, then $r'$ is a robot that lies within $c$ distance from $v_i$ and it is a contradiction since $v_i$ is a part of a free path of $r$.

Since $r$ is any arbitrary robot, the above is true for all the robots. This implies there must exist at least one set of such $c-1$ designated nodes for each robot positioned at some node.

As we started with $\lfloor\frac{n}{c}\rfloor+1$ many robots, we have $(c-1)\lfloor\frac{n}{c}\rfloor+(c-1)$ distinct nodes corresponding to those robots. Further, $\lfloor\frac{n}{c}\rfloor+1$ robots positioned at $\lfloor\frac{n}{c}\rfloor+1$ nodes. Hence, the total number of nodes in the graph required to place $\lfloor\frac{n}{c}\rfloor+1$ robots with the assumed condition is: $(c-1)\lfloor\frac{n}{c}\rfloor+(c-1) + \lfloor\frac{n}{c}\rfloor+1 = c\lfloor\frac{n}{c}\rfloor+c \geq n+1$. This is a contradiction as the number of nodes in the graph is $n$. So our assumption that any pair of robots has distance at least $2c-1$ is wrong. Hence the proof.
\endproof
Then, from the above results, namely,  Theorem~\ref{thm:main}, Lemma~\ref{lem:meeting-i-hop-robots},  Lemma~\ref{lem:main-termination}  and Lemma~\ref{lem:hop-distance-bound}, we get the following results on gathering with detection. 

\begin{theorem}[Gathering with Detection]\label{thm:gathering-with-detection}
Given an $n$-node anonymous graph (undirected and connected) and $k$ robots are distributed over the nodes arbitrarily, then the robots can be gathered to a single node deterministically and every robot detects the gathering in time: 
\begin{enumerate}[(i)]
    \item $O(n^3)$ rounds, if $k \geq \left\lfloor\frac{n}{2}\right\rfloor+1$.
    
    \item $O(n^4 \log n)$ rounds, if  $\left\lfloor\frac{n}{3}\right\rfloor+1 \leq k < \left\lfloor\frac{n}{2}\right\rfloor+1$.
    
    \item $\tilde{O}(n^5)$ rounds, if  $ k < \left\lfloor\frac{n}{3}\right\rfloor+1$. 
    
\end{enumerate}
The algorithm requires that robots know only the value of $n$. Each robot requires $O(M + m\log n)$ bits of memory, where $M$ is the memory required to implement the UXS.  
\end{theorem}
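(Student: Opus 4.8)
The plan is to obtain this theorem as a direct corollary of three ingredients already in hand: Theorem~\ref{thm:main} (the running-time analysis of {\sc Faster-Gathering} in terms of the closest-pair hop distance in the initial configuration), Lemma~\ref{lem:hop-distance-bound} (which turns a lower bound on the number of robots $k$ into an upper bound on that closest-pair distance), and the detection guarantees already established, namely Lemma~\ref{lem:main-termination} together with the detecting UXS routine of Theorem~\ref{thm:main-via-uxs}. The algorithm run by the robots is exactly {\sc Faster-Gathering}; what remains is to bound its running time in each of the three regimes of $k$ and to confirm that detection holds in all of them.

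First I would treat case (i). When $k \geq \lfloor n/2 \rfloor + 1$, instantiate Lemma~\ref{lem:hop-distance-bound} with $c = 2$ to get two robots at hop distance at most $2c-2 = 2$; hence the initial configuration is either undispersed or dispersed with two robots at distance $1$ or $2$, and in every such subcase Theorem~\ref{thm:main} gives $O(n^3)$ rounds. For case (ii), when $\lfloor n/3 \rfloor + 1 \leq k < \lfloor n/2 \rfloor + 1$, apply Lemma~\ref{lem:hop-distance-bound} with $c = 3$ to get two robots at hop distance at most $2c-2 = 4$; writing $i \in \{0,1,2,3,4\}$ for the true closest-pair distance, Theorem~\ref{thm:main} gives $O(n^3)$ rounds for $i \leq 2$ and $O(n^i \log n)$ rounds for $i \in \{3,4\}$, and since $O(n^3) \subseteq O(n^4 \log n)$ the uniform bound $O(n^4 \log n)$ follows. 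Finally, for case (iii), $k < \lfloor n/3 \rfloor + 1$, no nontrivial bound on the closest-pair distance is available, so we just invoke Theorem~\ref{thm:main}: even if none of Steps~1--6 of {\sc Faster-Gathering} succeed, Step~7 runs the UXS-based algorithm of Section~\ref{sec:gathering-via-uxs}, which gathers all robots regardless of the configuration, in $\tilde{O}(n^5)$ rounds.

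For detection, in cases (i) and (ii) gathering is completed at the end of one of the first six steps of {\sc Faster-Gathering} (the step whose {\sc $i$-Hop-Meeting} matches the true closest-pair distance, after which {\sc Undispersed-Gathering} collects all robots by Theorem~\ref{thm:main-undispersed}), and Lemma~\ref{lem:main-termination} shows that at the end of any of those six steps a robot is alone if and only if every robot is alone; thus each robot detects completion by checking whether it is co-located with another robot, terminating if so and otherwise proceeding to the next step. In case (iii) detection is inherited verbatim from the UXS routine invoked in Step~7, by Theorem~\ref{thm:main-via-uxs}. The per-robot memory bound $O(M + m\log n)$ bits is exactly that of Theorem~\ref{thm:main}. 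The only genuinely delicate point in the argument is the soundness of detection at the junctions between steps --- that a robot finding itself alone at the end of a step may safely conclude that gathering has not yet occurred and move on rather than terminate prematurely --- but this is precisely Lemma~\ref{lem:main-termination}; the rest is bookkeeping over the at most six possible values of the closest-pair hop distance.
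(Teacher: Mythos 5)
Your proposal is correct and follows essentially the same route as the paper: instantiate Lemma~\ref{lem:hop-distance-bound} with $c=2$ and $c=3$ to bound the closest-pair hop distance by $2$ and $4$ respectively, then invoke Theorem~\ref{thm:main} for the running-time bounds and Lemma~\ref{lem:main-termination} (plus Theorem~\ref{thm:main-via-uxs} in the sparse case) for detection. Your extra remarks on soundness of detection at the step boundaries are a faithful elaboration of what the paper leaves implicit, not a different argument.
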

\proof
Let us show the bounds for all the cases one by one. 
\begin{enumerate}[(i)]
\item If there are $k \geq \left\lfloor\frac{n}{2}\right\rfloor+1$ robots, then the Lemma~\ref{lem:hop-distance-bound} ensures that there exists a pair of robots within $2$-hop distance. Then by Theorem~\ref{thm:main}, gathering is achieved in $O(n^3)$ rounds, and by Lemma~\ref{lem:main-termination}, every robot detects it.

\item Similarly, for $\left\lfloor\frac{n}{3}\right\rfloor+1 \leq k < \left\lfloor\frac{n}{2}\right\rfloor+1 $, the Lemma~\ref{lem:hop-distance-bound} ensures that there exists a pair of robots within $4$-hop distance. Then by Theorem~\ref{thm:main} and Lemma~\ref{lem:main-termination}, gathering with detection is achieved in $O(n^4\log n)$ rounds. 

\item For $k < \left\lfloor\frac{n}{3}\right\rfloor+1$,  gathering can be achieved in $\tilde{O}(n^5)$ rounds, from Theorem \ref{thm:main}. \endproof 
\end{enumerate} 

Therefore, it follows that the algorithm {\sc Faster-Gathering} solves the gathering with detection faster (than the existing results) if there are {\em many} robots in the system. To be specific, if there are at least $\left\lfloor\frac{n}{3}\right\rfloor+1$ robots, then our algorithm performs significantly faster than any existing algorithm as it requires no more than $O(n^4\log n)$ rounds to gather the robots (with detection). And if the number of robots is more than $\left\lfloor\frac{n}{2}\right\rfloor+1$, then our algorithm performs even better --- takes only $O(n^3)$ rounds. Note that, while having many robots is a sufficient condition for achieving faster gathering with detection, it is not a necessary condition. One can achieve the same with any number of $k$ robots ($k\geq 2$), if there exists a pair of robots located within a distance $5$ from each other on the graph in the initial configuration. This is evident from Theorem \ref{thm:main}.

\section{Conclusion and Future Work}
\label{sec:conclusion}
In this paper, we looked at the problem of gathering with detection of robots in a graph. We presented a deterministic algorithm that worked faster than all pre-existing deterministic algorithms for this problem and even for the easier problem of gathering without detection, subject to some conditions. 

However, in order for this algorithm to work, we assumed that all robots simultaneously woke up. An interesting future direction would be to see if we can leverage this approach of utilizing many robots for faster gathering, even if robots wake up at arbitrary times. Additionally, it would be interesting to see if we can still design fast algorithms if the communication capabilities of the robots are reduced, for example to just being able to beep or not. Finally, try to get a faster deterministic gathering algorithm for a small number of robots, say, whether it is possible a $o(n^5)$-time algorithm for a constant (or polylogarithmic) number of robots.   

An interesting line of future work is as follows. We do not restrict the size of messages exchanged between robots at a node. It would be interesting to consider the model where the size of messages is restricted and study the resulting effect on running time.

\bibliographystyle{plainurl}
\bibliography{references}


\end{document}